\newcommand{\cat}{\circ}
\DeclareMathOperator{\Inf}{\mathit{Inf}}
\newtheorem{definition}{Definition}{\bfseries}{\itshape}
{\bfseries}{\itshape}
\newtheorem{example}{Example}{\bfseries}{\itshape}
\newtheorem{theorem}{Theorem}{\bfseries}{\itshape}
\newtheorem{lemma}{Lemma}{\bfseries}{\itshape}
\newenvironment{varthm}[1][Theorem]{\begin{trivlist}
\item[\hskip \labelsep {\bfseries #1}]}{\end{trivlist}}
\newcommand{\bbN}{\mathbb{N}}
\newcommand{\calA}{\mathcal{A}}
\newcommand{\calB}{\mathcal{B}}
\newcommand{\calC}{\mathcal{C}}
\newcommand{\calD}{\mathcal{D}}
\newcommand{\calF}{\mathcal{F}}
\newcommand{\calS}{\mathcal{S}}
\newcommand{\calW}{\mathcal{W}}
\newcommand{\scrG}{\mathscr{G}}
\newcommand{\scrL}{\mathscr{L}}
\begin{document}

\title{A Tight Lower Bound for Streett Complementation}

\author{Yang Cai\\
MIT CSAIL \\
The Stata Center, 32-G696 \\
Cambridge, MA 02139 USA \\
ycai@csail.mit.edu \\
\and
Ting Zhang \\
Iowa State University \\
226 Atanasoff Hall \\
Ames, IA 50011 USA\\
tingz@iastate.edu
}

\date{}
\maketitle
\thispagestyle{empty}

\begin{abstract}
Finite automata on infinite words ($\omega$-automata) proved to be a powerful weapon for modeling and reasoning infinite behaviors of reactive systems. Complementation of $\omega$-automata is crucial in many of these applications. But the problem is non-trivial; even after extensive study during the past two decades, we still have an important type of $\omega$-automata, namely Streett automata, for which the gap between the current best lower bound $2^{\Omega(n \lg nk)}$ and upper bound $2^{\Omega(nk \lg nk)}$ is substantial, for the Streett index size $k$ can be exponential in the number of states $n$. In~\cite{CZ11b} we showed a construction for complementing Streett automata with the upper bound $2^{O(n \lg n+nk \lg k)}$ for $k = O(n)$ and $2^{O(n^{2} \lg n)}$ for $k=\omega(n)$. In this paper we establish a matching lower bound $2^{\Omega(n \lg n+nk \lg k)}$ for $k = O(n)$ and $2^{\Omega(n^{2} \lg n)}$ for $k = \omega(n)$, and therefore showing that the construction is asymptotically optimal with respect to the $2^{\Theta(\cdot)}$ notation.
\end{abstract}

\thispagestyle{empty}

\section{Introduction}
\label{sec:introduction}
Complementation is a fundamental notion in automata theory. Given an automaton $\calA$, the complementation problem asks to find an automaton $\calB$ that accepts exactly all words that $\calA$ does not accept. Complementation connects automata theory with mathematical logic due to the natural correspondence between language complementation and logical negation, and hence plays a pivotal role in solving many decision and definability problems in mathematical logic.

A fundamental connection between automata theory and the monadic second order logics was demonstrated by B\"{u}chi~\cite{Buc60}, who started the theory of finite automata on infinite words ($\omega$-automata)~\cite{Buc66}. The original $\omega$-automata are now referred to as B\"{u}chi automata and B\"{u}chi complementation was a key to establish that the class of $\omega$-regular languages (sets of $\omega$-words generated by product $\cat$, union $\cup$, star ${}^{*}$ and limit ${}^{\omega}$) is closed under complementation~\cite{Buc66}.

B\"{u}chi's discovery also has profound repercussions in applied logics. Since the '80s, with increasing demand of reasoning infinite computations of reactive and concurrent systems, $\omega$-automata have been acknowledged as unifying representation for \emph{programs} as well as for \emph{specifications}~\cite{VW86}. Complementation of $\omega$-automata is crucial in many of these applications.

But complementation of $\omega$-automata is non-trivial. Only after extensive studies in the past two decades ~\cite{SVW87,Mic88,Saf88,FKV06,Yan06,Sch09} (also see survey~\cite{Var07}), do we have a good understanding of the complexity of B\"{u}chi complementation. But a question about a very important type of $\omega$-automata remains unanswered, namely the complexity of Streett complementation, where the gap between the current lower bound and upper bound is substantial. Streett automata are ones of a kind, because Streett acceptance conditions naturally encode \emph{strong fairness} that infinitely many requests are responded infinitely often, a necessary requirement for meaningful computations~\cite{FK84,Fra86}.

\paragraph{Related Work.}
Obtaining nontrivial lower bounds has been difficult. The first nontrivial lower bound for B\"{u}chi complementation is $n!\approx (0.36n)^{n}$, obtained by Michel~\cite{Mic88,Lod99}. In 2006, combining ranking with \emph{full automaton} technique, Yan improved the lower bound of B\"{u}chi complementation to $\Omega(L(n))$~\cite{Yan06}, which now is matched tightly by the upper bound $O(n^{2}(L(n))$~\cite{Sch09}, where $L(n)\approx(0.76n)^{n}$. Also established in~\cite{Yan06} was a $(\Omega(nk))^{n}=2^{\Omega(n\lg nk)}$ tight lower bound (where $k$ is the number of B\"{u}chi indices) for generalized B\"{u}chi complementation, which also applies to Streett complementation because generalized B\"{u}chi automata are a subclass of Streett automata. In~\cite{CZL09}, we proved a tight lower bound $2^{\Omega(nk \lg n)}$ for Rabin complementation (where Rabin index size $k$ can be as large as $2^{n-\epsilon}$ for any arbitrary but fixed $\epsilon > 0$). Several constructions for Streett complementation exist~\cite{SV89,Kla91,Saf92,KV05a,Pit06}, but all involve at least $2^{O(nk\lg nk)}$ state blow-up, which is significantly higher than the current best lower bound $2^{\Omega(n\lg nk)}$, since the Streett index size $k$ can reach $2^{n}$. Determining the complexity of Streett complementation has been posed as an open problem since the late '80s~\cite{SV89,KV05a,Yan06,Var07}. In~\cite{CZ11b} we showed a construction for Streett complementation with the upper bound $2^{O(n \lg n+nk \lg k)}$ for $k = O(n)$ and $2^{O(n^{2} \lg n)}$ for $k=\omega(n)$. In this paper we establish a matching lower bound $2^{\Omega(n \lg n+nk \lg k)}$ for $k = O(n)$ and $2^{\Omega(n^{2} \lg n)}$ for $k = \omega(n)$, and therefore showing that the construction in~\cite{CZ11b} is essentially optimal at the granularity of $2^{\Theta(\cdot)}$. This lower bound is obtained by applying two techniques: \emph{fooling set} and \emph{full automaton}.

\paragraph{Fooling Set.}
The fooling set technique is a classic way of obtaining lower bounds on nondeterministic finite automata on finite words (NFA). Let $\Sigma$ be an alphabet and $\scrL \subseteq \Sigma^{*}$ a regular language. A set of pairs $P=\{(x_{i}, y_{i}) \mid x_{i}, y_{i} \in \Sigma^{*}, 1 \le i \le n \}$ is called a \emph{fooling set} for $\scrL$, if $x_{i}y_{i} \in \scrL$ for $1 \le i \le n$ and $x_{i}y_{j} \not \in \scrL$ for $1 \le i,j \le n$ and $i \not = j$. If $\scrL$ has a fooling set $P$, then any NFA accepting $\scrL$ has at least $|P|$ states~\cite{GS96}. The purpose of a fooling set is to identify runs with dual properties (called fooling runs): fragments of accepting runs of $\scrL$, when pieced together in certain ways, induce non-accepting runs. By an argument in the style of Pumping Lemma, a small automaton would not be able to distinguish how it arrives at a state, and hence it cannot differentiate between some accepting runs and some non-accepting ones.

In the setting of $\omega$-automata, a similar technique exists, which we refer to as Michel's scheme~\cite{Mic88}. A set $P=\{x_{i} \in \Sigma^{*} \mid 1 \le i \le n \}$ is called a \emph{fooling set} for $\scrL$, if $(x_{i})^{\omega} \in \scrL$ for $1 \le i \le n$ and $((x_{i})^{+}(y_{j})^{+})^{\omega} \subseteq \overline{\scrL}$ for $1 \le i,j \le n$ and $i \not = j$~\cite{Mic88,Lod99}.

\paragraph{Full Automaton.}
Sakoda and Sipser introduced the \emph{full automaton} technique~\cite{SS78} (the name was first coined in~\cite{Yan06}) and used it to obtain several completeness and lower bound results on transformations involving $2$-way finite automata~\cite{SS78}. In particular, they proved a classic result of automata theory: the lower bound of complementing an NFA with $n$ states is $2^{n}$.

To establish lower bounds for complementation, one starts with designing a class of automata $\calA_{n}$ and then a class of words $\calW_{n}$ such that $\calW_{n}$ are not contained in $\scrL(\calA_{n})$. Next one shows that runs of purported complementary automata $\calC_{n}$ on $\calW_{n}$ exhibit dual properties by application of the fooling set technique. However, some fooling runs can only be generated by long and sophisticated words, which are very difficult to be ``guessed'' right from the beginning. The ingenuity of the full automaton technique is to remove two levels of indirections: since the ultimate goal is to construct fooling runs, why should not one start with runs directly, and build $\calW_{n}$ and $\calA_{n}$ later?

Without a priori constraints imposed from $\calA_{n}$ or $\calW_{n}$ (they do not exist yet), full automata operate on all possible runs; for a full automaton of $n$ states, every possible unit transition graph (bipartite graph with $2n$ vertices) is identified with a letter, and words are nothing but potential run graphs. Removing the two levels of indirections proved to be powerful. By this technique, the $2^{n}$ lower bound proof for complementing NFA was surprisingly short and easy to understand~\cite{SS78} (a fooling set method was implicit in the proof).

We should note that full automata operate on large alphabets whose size grows exponentially with the state size, but this does not essentially limit its application to automata on conventional alphabets. By an encoding trick, a large alphabet can be mapped to a small alphabet with no compromise to lower bound results~\cite{Sip79,Yan06,CZL09}.

\paragraph{Ranking.}
For $\omega$-automata, the power of fooling set and full automaton technique was further enhanced by the use of rankings on run graphs~\cite{Yan06,CZL09}. Since first introduced in~\cite{Kla91}, rankings have been shown to a powerful tool to represent properties of run graphs; complementation constructions for various types of $\omega$-automata were obtained by discovering respective rankings that precisely characterize those run graphs that contain no accepting path (with respect to source automata)~\cite{KV01,KV04,KV05a,FKV06,Kup06}. With the help of rankings, constructing a fooling set amounts to designing certain type of rankings. In fact, as shown below, an explicit description of a fooling set might be very hard to find, but the essential properties the fooling set induce can be concisely represented by certain type of rankings.

\paragraph{Our Results.}
In this paper we establish a lower bound $L(n,k)$ for Streett complementation: $2^{\Omega(n\lg n + k n\lg k)}$ for $k=O(n)$ and $2^{\Omega(n^{2} \lg n)}$ for $k=\omega(n)$, which matches the upper bound obtained in~\cite{CZ11b}. This lower bound applies to all Streett complementation constructions that output union-closed automata (see Section~\ref{sec:preliminaries}), which include B\"{u}chi, generalized B\"{u}chi and Streett automata. This bound considerably improves the current best bound $2^{\Omega(n \lg nk)}$~\cite{Yan06}, especially in the case $k = \Theta(n)$.

Determinization is another fundamental concept in automata theory and it is closely related to complementation. A deterministic $T$-automaton can be easily complemented by switching from $T$-acceptance condition to the dual co-$T$ condition (e.g., Streett vs. Rabin). Therefore, the lower bound $L(n,k)$ also applies to Streett determinization if the output automata are the dual of union-closed automata. In particular, no construction for Streett determinization can output Rabin automata with state size asymptotically less than $L(n,k)$.

We can get a slightly weaker result for constructions that output Rabin automata (which are not union-closed): no construction for Streett complementation can output Rabin automata with state size $n' \le L(n,k)$ and index size $k' = O(n')$, due to the fact that a Rabin automaton with state $n'$ and index size $k'$ can be translated to an equivalent B\"{u}chi automaton with $O(n'k')$ states. For the same reason, no construction for Streett determinization can output Streett automata with state size $n' \le L(n,k)$ and index size $k' = O(n')$.

Even with the fooling set and full automaton techniques and the assistance of rankings, a difficulty remains: in the setting of Streett complementation, how large can a fooling set for a complementary automaton be? The challenge is two-fold. One is to implant potentially contradictory properties in each member of a fooling set so that complementary run graphs can be obtained by certain combinations of those members. The other is to avoid correlations between members of a fooling set so that each member has to be memorized by a distinct state in a purported complementary automaton. By exploiting the nature of Streett acceptance conditions, our fooling set is obtained via a type of multi-dimensional rankings, called $Q$-rankings, and members in the fooling set are called $Q$-words. To simultaneously accommodate potentially contradictory properties in multi-dimension requires handling nontrivial subtleties. We shall continue this discussion in Section~\ref{sec:lower-bound} after presenting the definition of $Q$-rankings.

\paragraph{Paper Organization.}
Section~\ref{sec:preliminaries} presents notations and basic terminology in automata theory. Section~\ref{sec:lower-bound} introduces full Streett automata, $Q$-rankings and $Q$-words, and use them to establish the lower bound. Section~\ref{sec:conclusion} concludes with a discussion. Technical proofs are omitted from the main text, but they can be found in the appendix.

\section{Preliminaries}
\label{sec:preliminaries}

\paragraph{Basic Notations.}
Let $\bbN$ be the set of natural numbers. We write $[i..j]$ for $\{ k \in \bbN \, \mid \, i \le k \le j\}$, $[i..j)$ for $[i..j-1]$, $[n]$ for $[0..n)$. For an infinite sequence $\varrho$, we use $\varrho(i)$ to denote the $i$-th component for $i \in \bbN$, $\varrho[i..j]$ (resp. $\varrho[i..j)$) to denote the subsequence of $\varrho$ from position $i$ to position $j$ (resp. $j-1$). Similar notations for finite sequences and we use $|\varrho|$ to denote the length of $\varrho$. We assume readers are familiar with notations in language theory, such as $\alpha \cat \alpha'$, $\alpha^{*}$, $\alpha^{+}$ and $\alpha^{\omega}$ where $\alpha$ and $\alpha'$ are sequences and $\alpha$ is finite, and similar ones such as $S \cat S'$, $S^{*}$, $S^{+}$ and $S^{\omega}$ where $S$ is a set of finite sequences and $S'$ is a set of sequences.

\paragraph{Automata and Runs.}
A finite (nondeterministic) automaton on infinite words ($\omega$-automaton) is a $5$-tuple $\calA=\langle\Sigma, S, Q, \Delta, \calF\rangle$, where $\Sigma$ is an alphabet, $S$ is a finite set of states, $Q \subseteq S$ is a set of initial states, $\Delta \subseteq S \times \Sigma \times S$ is a transition relation, and $\calF$ is an acceptance condition.

An infinite word ($\omega$-words) over $\Sigma$ is an infinite sequence of letters in $\Sigma$. A \emph{run} $\varrho$ of $\calA$ over an $\omega$-word $w$ is an infinite sequence of states in $S$ such that $\varrho (0)\in Q$ and, $\langle\varrho(i),w(i),\varrho(i\!+\!1)\rangle \in \Delta$ for $i \in \bbN$. Finite runs are defined similarly. Let $\Inf(\varrho)$ the set of states that occur infinitely many times in $\varrho$. An automaton accepts $w$ if there exists a run $\varrho$ over $w$ that satisfies $\calF$, which usually is defined as a predicate on $\Inf(\varrho)$. We use $\scrL(\calA)$ to denote the set of $\omega$-words accepted by $\calA$ and $\overline{\scrL(\calA)}$ the complement of $\scrL(\calA)$.

\paragraph{Acceptance Conditions and Automata Types.}
$\omega$-automata are classified according their acceptance conditions. Below we list three types of $\omega$-automata relevant to this paper. Let $F$ be a subset of $Q$ and $G, B$ two functions $I \to 2^{Q}$ where $I=[1..k]$ is called the \emph{index set}.
\begin{itemize}
\item \emph{B\"{u}chi}: $\langle F \rangle$: $\Inf(\varrho) \cap F\neq\emptyset$.
\item \emph{Streett}: $\langle G,B\rangle_{I}$: $\forall i \in I$, $\Inf(\varrho) \cap G(i)\neq\emptyset \to \Inf(\varrho)\cap B(i)\neq\emptyset$.
\item \emph{Rabin}: $[G,B]_{I}$: $\exists i \in I$, $\Inf(\varrho)\cap G(i)\neq\emptyset \wedge \Inf(\varrho)\cap B(i)=\emptyset$.
\end{itemize}
Note that Streett and Rabin are dual to each other. An automaton $\calA$ is called \emph{union-closed} if when two runs $\varrho$ and $\varrho'$ are accepting, so is any run $\varrho''$ if $\Inf(\varrho'') = \Inf(\varrho) \cup \Inf(\varrho')$. It is easy to verify that both B\"{u}chi and Streett automata are union-closed while Rabin automata are not. Let $J \subseteq I$. We use $\langle G,B\rangle_{J}$ to denote the Streett condition with respect to only indices in $J$. When $J$ is a singleton, say $J=\{j\}$, we simply write $\langle G(j),B(j)\rangle$ for $\langle G,B\rangle_{J}$. We can assume that $B$ is injective and the index size $k$ is bound by $2^{n}$, because if $B(i)=B(i')$ for two different $i,i' \in I$, then we can shrink the index set $I$ by replacing $\langle G,B\rangle_{\{i,i'\}}$ by $\langle G(i)\cup G(i'), B(i) \rangle$. The same convention and assumption are used for Rabin condition.

\paragraph{$\Delta$-Graphs.}
A \emph{$\Delta$-graph} (run graph) of an $\omega$-word $w$ under $\calA$ is a directed graph $\scrG_{w}=(V,E)$ where $V=S \times \bbN$ and $E=\{\langle\langle s,l\rangle,\langle s',l+1\rangle\rangle \in V \times V \, \mid \, s, s' \in S,\ l \in \bbN, \langle s,w(l),s'\rangle\in\Delta \, \}$. By the \emph{$l$-th level}, we mean the vertex set $S \times \{l\}$. Let $S=\{s_{0}, \ldots, s_{n-1}\}$. By \emph{$s_{l}$-track} we mean the vertex set $\{s_{l}\} \times \bbN$. For a subset $X$ of $S$, we call a vertex $\langle s, l\rangle$ an $X$-vertex if $s \in X$. We simply use $s$ for $\langle s, l \rangle$ when the index is irrelevant.

A \emph{$\Delta$-graph} $\scrG_{w}$ of a finite word $w$ is defined similarly. By $|\scrG_{w}|$ we denote the length of $\scrG_{w}$, which is the same as $|w|$. $\scrG_{\sigma}$ for $\sigma \in \Sigma$ is called a \emph{unit} $\Delta$-graph. A path in $\scrG_{w}$ is called a \emph{full path} if the path goes from level $0$ to level $|\scrG_{w}|$. By $\scrG_{w} \cat \scrG_{w'}$, we mean the concatenation of $\scrG_{w}$ and $\scrG_{w'}$, which is the graph obtained by merging the last level of $\scrG_{w}$ with the first level of $\scrG_{w'}$. Note that $\scrG_{w} \cat \scrG_{w'} = \scrG_{w \cat w'}$.

Let $w$ be a finite word. For $l, l' \in \bbN$, $s, s' \in S$ we write $\langle s, l\rangle \xrightarrow{w} \langle s', l'\rangle$ to mean that there exists a run $\varrho$ of $\calA$ such that $\varrho[l..l']$, the subsequence $\varrho(l)\varrho(l+1) \cdots \varrho(l')$ of $\varrho$, is a finite run of $\calA$ from $s$ to $s'$ over $w$. We simply write $s \xrightarrow{w} s'$, when omitting level indices causes no confusion.

\paragraph{Full Automata.}

A full automaton $\langle\Sigma, S, Q,\Delta,\calF\rangle$ is a finite automaton with the following conditions: $\Sigma = 2^{S \times S}$, $\Delta \subseteq S \times 2^{S \times S} \times S$, and for all $s, s' \in S$, $\sigma \in \Sigma$, $\langle s, \sigma, s' \rangle \in \Delta$ if and only if $\langle s, s' \rangle \in \sigma$~\cite{SS78,Yan06,CZL09}. For full automata, the alphabet $\Sigma$ and the transition relation $\Delta$ are completely determined by $S$. As stated in the introduction, the essence of full automaton technique is to use run graphs as free as possible, without worrying which word generates which run graph. Let the functional version of $\Delta$ be $\delta: \Sigma \to 2^{S \times S}$, where for every $s,s' \in S$ and every $\sigma \in \Sigma$, $\langle s, s' \rangle \in \delta(\sigma)$ if and only if $\langle s, \sigma, s' \rangle \in \Delta$. The function $\delta$ maps a letter $\sigma$ to a unit $\Delta$-graph $\scrG_{\sigma}$, which represents the complete behavior of $\calA$ over $\sigma$ (technically speaking, $\scrG_{\sigma}$, with index dropped, is the graph of $\delta(\sigma)$). In the setting of full automata, $\delta$ is simply the identity function on $2^{S \times S}$. Words and run graphs are essentially the same thing. From now on we use the two terms interchangeably. For example, for a word $w$, $s \xrightarrow{w} s'$ is equivalent to say that a full path in $\scrG_{w}$ goes from $s$ to $s'$. 
\section{Lower Bound}
\label{sec:lower-bound}

In this section we define full Streett automata, and related $Q$-rankings and $Q$-words, and use them to establish the lower bound. From now on, we reserve $n$ and $k$, respectively, for the effective state size and index size in our construction (except in Theorem~\ref{thm:lower-bound} and Section~\ref{sec:conclusion} where $n$ and $k$, respectively, mean the state size and index size of a complementation instance). All related notions are in fact parameterized with $n$ and $k$, but we do not list them explicitly unless required for clarity. Let $I$ be $[1..k]$. We first describe the plan of proof.

For each $k, n >0$, we define a full Streett automaton $\calS=(\Sigma,S,Q,\Delta,\calF)$ and a set of $Q$-rankings $f: Q \to [1..n] \times I^{k}$. For each $Q$-ranking $f$, we define a finite $\Delta$-graph $\scrG_{f}$, called a $Q$-word. We then show that for each $f$, $(\scrG_{f})^{\omega} \not \in \scrL(\calS)$, yet $((\scrG_{f})^{+}(\scrG_{f'})^{+})^{\omega} \subseteq \scrL(\calS)$ for every distinct pair of $Q$-rankings $f$ and $f'$, that is, $Q$-words constitute a fooling set for $\overline{\scrL(\calS)}$. Using Michel's scheme~\cite{Mic88,Lod99,Yan06}, we show that if a union-closed automaton $\calC$ complements $\calS$, then its state size is no less than the number of $Q$-rankings, because otherwise we can ``weave'' the runs of $(\scrG_{f})^{\omega}$ and $(\scrG_{f'})^{\omega}$ in such a way that $\calC$ would accept a word in $((\scrG_{f})^{+}(\scrG_{f'})^{+})^{\omega}$, contradicting $((\scrG_{f})^{+}(\scrG_{f'})^{+})^{\omega} \subseteq \scrL(\calS)$.

\begin{definition}[Full Streett Automata]
\label{def:FS}
A family of full Streett automata $\{\calS=\langle\Sigma,S,Q,\Delta,\calF\rangle\}_{n, k >0}$ is such that
\begin{enumerate}[label=\ref{def:FS}.\arabic*,ref=\ref{def:FS}.\arabic*]
\item $S=Q \cup P_{\mathrm{G}} \cup P_{\mathrm{B}} \cup T$ where $Q$, $P_{\mathrm{G}}$, $P_{\mathrm{B}}$ and $T$ are pairwise disjoint sets of the following forms:
\begin{align*}
Q &=\{q_{0},\cdots,q_{n-1}\}, & P_{\mathrm{G}} &=\{g_{1},\cdots,g_{k} \}, &
T &=\{t\}, & P_{\mathrm{B}} &=\{b_{1},\cdots,b_{k} \} \, .
\end{align*}
\item $\calF=\langle G,B\rangle_{I}$ such that $G(i) = \{g_{i}\}$ and $B(i)=\{b_{i}\}$ for $i \in I$.
\end{enumerate}
\end{definition}
$Q$ is intended to be the domain of $Q$-rankings. $P_{\mathrm{G}}$ and $P_{\mathrm{B}}$ are pools from which singletons $G(i)$'s and $B(i)$'s are formed. $T$ is to be used for building a \emph{bypass} track that makes graph concatenation behaves like a parallel composition so that properties associated with each subgraph are all preserved in the final concatenation.

\begin{definition}[$Q$-Ranking]
\label{def:Q-ranking} A \emph{$Q$-ranking} for $\calS$ is a function $f: Q \to [1..n] \times I^{k}$, which is identified with a pair of functions $\langle r,h \rangle$, where $r: Q \to [1..n]$ is one-to-one, and $h: Q \to I^{k}$ maps a state to a permutation of $I$.
\end{definition}
For a $Q$-ranking $f=\langle r, h \rangle$, we call $r$ (resp. $h$) the $R$-ranking or numeric ranking (resp. $H$-ranking or index ranking) of $f$. We use \emph{$Q$-ranks} (resp. \emph{$R$-ranks}, \emph{$H$-ranks}) to mean values of $Q$-rankings (resp. $R$-rankings, $H$-rankings). For $q \in Q$, we write $h(q)[i]$ ($i \in I$) to denote the $i$-th component of $h(q)$. Let $\calD^{Q}$ be the set of all $Q$-rankings and $|\calD^{Q}|$ be the size of $\calD^{Q}$. Clearly, we have $n!$ $R$-rankings and $(k!)^{n}$ $H$-rankings, and so $|\calD^{Q}| = (n!)(k!)^{n} = 2^{\Omega(n \lg n + n k \lg k)}$.

As stated in the introduction, $Q$-rankings are essential for obtaining the lower bound. It turns out that $H$-rankings are the core of $Q$-rankings, for $(k!)^{n}$ already begins to dominate $n!$ when $k$ is larger than $\lg n$. Now we explain the idea behind the design of $H$-rankings. Recall that our goal is to have $(\scrG_{f})^{\omega} \not \in \scrL(\calS)$ for any $Q$-ranking $f$ as well as $((\scrG_{f})^{+}(\scrG_{f'})^{+})^{\omega} \subseteq \scrL(\calS)$ for any two different $Q$-rankings $f$ and $f'$. For simplicity, we ignore $R$-rankings and assume $Q$-rankings are just $H$-rankings. We say that a finite path \emph{discharges} obligation $j$ if the path visits $B(j)$ and a finite path \emph{owes} obligation $j$ if the path visits $G(j)$ but does not visit $B(j)$. As shown below, for each $i \in [n]$, $q_{i}$-track in $\scrG_{f}$ is associated with the $k$-tuple $f(q_{i})$, which is a permutation of $I$, and exactly $k$ \emph{full} paths in $\scrG_{f}$ goes from the beginning of $q_{i}$-track to the end of $q_{i}$-track. We say that those paths \emph{on $q_{i}$-track}. For each $i\in [n]$ and $j \in I$, the $j$-th full path on $q_{i}$-track owes exactly the obligation $f(q_{i})[j]$. Let $\varrho=\varrho_{0} \cat \varrho_{1} \cat \cdots$ be an infinite path in $(\scrG_{f})^{\omega}$ where $\varrho_{t}$ ($t \ge 0$) is a full path in the $t$-th $\scrG_{f}$. Without $R$-rankings, our construction prescribes that all $\varrho_{t}$ start and end at a specific track, say $q_{i}$-track, and hence are associated with $f(q_{i})$. Obligations associated with all $\varrho_{t}$ simply form a subset $I'$ of $I$. However, we impose an ordering $\prec_{f, i}$ on $I'$ (different from the standard numeric ordering) such that $f(q_{i})[j] \prec_{f, i} f(q_{i})[j']$ if and only if $j < j'$. The ordering $\prec_{f, i}$ is total thanks to $f(q_{i})$ being a permutation of $I$. Then a condition in our construction guarantees that the minimum obligation with respect to $\prec_{f,i}$ will never be discharged on $\varrho$, and therefore $\varrho$ violates $\langle G,B\rangle_{I}$. Since this $\varrho$ is chosen arbitrarily, we have $(\scrG_{f})^{\omega} \not \in \scrL(\calS)$.

Now let $\scrG \in ((\scrG_{f})^{+}(\scrG_{f'})^{+})^{\omega}$. To show $\scrG \in \scrL(\calS)$, we construct an infinite path $\varrho=\varrho_{0} \cat \varrho_{1} \cat \cdots$ in $\scrG$ that satisfies $\langle G,B\rangle_{I}$, where $\varrho_{t}$ ($t \ge 0$) is a full path in the $t$-th subgraph (which is either $\scrG_{f}$ or $\scrG_{f'}$). Let $i$ be such that $f(q_{i}) \not = f'(q_{i})$ (it is always possible by the assumption $f\not =f'$). Different from before, $q_{i}$-track in $\scrG_{f}$ is associated with $f(q_{i})$ and $q_{i}$-track in $\scrG_{f'}$ is associated with $f'(q_{i})$. Since $f(q_{i})$ and $f'(q_{i})$ are different permutations of $I$, a condition in our construction ensures that a full path $\varrho_{f}$ in $\scrG_{f}$ and a full path $\varrho_{f'}$ in $\scrG_{f'}$, both on $q_{i}$-track, mutually discharge each other's obligations. So we let all $\varrho_{t}$ in $\scrG_{f}$ be $\varrho_{f}$ and all $\varrho_{t}$ in $\scrG_{f'}$ be $\varrho_{f'}$. Since there are infinitely many $\varrho_{f}$ and $\varrho_{f'}$ in $\varrho$, $\varrho$ satisfies $\langle G,B\rangle_{I}$, giving us $\scrG \in \scrL(\calS)$. Since $\scrG$ is chosen arbitrarily, we have $((\scrG_{f})^{+}(\scrG_{f'})^{+})^{\omega} \subseteq \scrL(\calS)$. Now we are read to formally define $Q$-words.

\begin{definition}[$Q$-Word]
\label{def:Q-word} A finite $\Delta$-graph $\scrG$ is called a \emph{$Q$-word} if every level of $\scrG$ is ranked by the same $Q$-ranking $f=\langle r,h \rangle$ and $\scrG$ satisfies the following additional conditions.
\begin{enumerate}[label=\ref{def:Q-word}.\arabic*,ref=\ref{def:Q-word}.\arabic*]
\item\label{en:Q-word-1} For every $q, q' \in Q$, if $r(q)>r(q')$, there exists a full path $\varrho$ from $\langle q, 0\rangle$ to $\langle q', |\scrG|\rangle$ such that $\varrho$ visits all of $B(1), \ldots, B(k)$.
\item\label{en:Q-word-2} For every $q \in Q$, there exist \emph{exactly} $k$ full paths $\varrho_{1}, \ldots, \varrho_{k}$ from $\langle q, 0\rangle$ to $\langle q, |\scrG|\rangle$ such that for every $i \in I$, $\varrho_{i}$ does not visit $B(h(q)[j])$ for $j \le i$, but visits $B(h(q)[j])$ for $i < j$, and $\varrho_{i}$ does not visit $G(h(q)[j])$ for $j < i$, but visits $G(h(q)[i])$.
\item\label{en:Q-word-3} Only $Q$-vertices have outgoing edges at the first level and incoming edges at the last level.
\item\label{en:Q-word-4} For every $q, q' \in Q$, there exists no full path from $\langle q, 0\rangle$ to $\langle q', |\scrG|\rangle$ if $r(q) < r(q')$.
\end{enumerate}
\end{definition}
Property~\eqref{en:Q-word-1} concerns with only $R$-rankings. It says that for every two tracks with different $R$-ranks, a path exists that goes from the track with higher rank to the track with the lower rank, and such a path discharges all obligations in $I$. So if those (finite) paths occur infinitely often as fragments of an infinite path $\varrho$, then $\varrho$ clearly satisfies the Streett condition $\langle G,B\rangle_{I}$. Property~\eqref{en:Q-word-2} concerns with only $H$-rankings. It says that exactly $k$ full \emph{``parallel''} paths exist between the two ends of every track, and each owes exactly one distinct obligation in $I$. As shown in Theorem~\ref{thm:lower-bound}, Property~\eqref{en:Q-word-2} is the core of the whole construction and proof, because with $k$ increasing, $H$-rankings contribute more and more to the overall complexity. Properties~\eqref{en:Q-word-3} and~\eqref{en:Q-word-4} are merely technical; they ensure that no other \emph{full paths} exist besides those prescribed by Properties~\eqref{en:Q-word-1} and~\eqref{en:Q-word-2}. Note that in general more than one $Q$-word could exist for a $Q$-ranking $f$. We simply pick an arbitrary one and call it the $Q$-word of $f$, denoted by $\scrG_{f}$.

\begin{theorem}[$Q$-Word]
\label{thm:existence}
A $Q$-word exists for every $Q$-ranking.
\end{theorem}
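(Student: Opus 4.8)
The plan is to construct $\scrG_f$ explicitly as a concatenation of a small number of unit $\Delta$-graphs (letters), one family of "gadgets" per track, and to verify Properties~\eqref{en:Q-word-1}--\eqref{en:Q-word-4} by inspection. Since we are working with full Streett automata, a unit $\Delta$-graph is just an arbitrary bipartite graph on $S \sqcup S$, so we have complete freedom: the only real content is to arrange the edges so that the prescribed full paths exist and no others do. I would first fix a $Q$-ranking $f = \langle r, h\rangle$ and describe a single "track gadget" $\scrH_q$ for each $q \in Q$ that realizes Property~\eqref{en:Q-word-2} in isolation: a block of $k+1$ levels (or $O(k)$ levels) in which exactly $k$ vertex-disjoint-ish paths run from $q$ at the top to $q$ at the bottom, where the $i$-th path is routed through $G(h(q)[i])$ at some level, avoids $G(h(q)[j])$ for $j<i$, avoids $B(h(q)[j])$ for $j \le i$, and passes through $B(h(q)[j])$ for every $j > i$. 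Concretely one can send path $i$ straight down its own copy of the $q$-track for the first $i$ steps, dip into $g_{h(q)[i]}$, then cascade downward through $b_{h(q)[k]}, b_{h(q)[k-1]}, \ldots, b_{h(q)[i+1]}$ on successive levels before returning to the $q$-track; this is the standard "staircase" construction and it visits exactly the required $G$- and $B$-pool vertices.

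Next I would handle Property~\eqref{en:Q-word-1}: for the ordered pairs $(q,q')$ with $r(q) > r(q')$ I need a full path from $q$ to $q'$ visiting all of $B(1),\dots,B(k)$. Rather than build a separate block for each of the $\binom{n}{2}$ pairs, the cleaner route is to use the bypass state $t \in T$ (its stated purpose): reserve one sub-block in which, for every $q$, there is an edge $q \to t$ at the top, $t$ loops or walks through $b_1, b_2, \ldots, b_k$ across $k$ consecutive levels, and then at the bottom there is an edge $t \to q'$ for every $q'$. A path that enters $t$ at the top and leaves at the bottom can then begin at any track and end at any track while sweeping all $B(i)$'s; restricting the entry/exit edges or relying on Property~\eqref{en:Q-word-4}'s $R$-rank barrier then ensures we only "use" this for the $r(q) > r(q')$ direction. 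The main design subtlety is making the concatenation of the per-track gadgets $\scrH_q$ and the $t$-gadget behave like a parallel composition: I must ensure that gluing block $\scrH_{q}$ after block $\scrH_{q'}$ does not create spurious full paths that cross from the $q'$-track into the $q$-track, which is exactly what the "bypass track" device is for — within each $\scrH_{q'}$ block I route the $q$-track ($q \neq q'$) straight down through an identity edge on the $q$-track (and the $t$-track straight down as well), so the only nontrivial routing in block $\scrH_{q'}$ happens on the $q'$-track, and Properties~\eqref{en:Q-word-3},~\eqref{en:Q-word-4} fall out because the first and last levels only ever have $Q$-vertices active and the $R$-rank ordering is never "violated" by any edge.

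Then Properties~\eqref{en:Q-word-3} and~\eqref{en:Q-word-4} I would verify directly from the construction: \eqref{en:Q-word-3} by declaring that at level $0$ only the states $q_0,\dots,q_{n-1}$ have outgoing edges and symmetrically at the last level, which is just a choice in defining the first and last unit graphs; \eqref{en:Q-word-4} by checking that no composed full path can end on a strictly higher $R$-track than it started — since the only block that moves a path off its own track is the $t$-gadget, and there I simply omit the exit edges $t \to q'$ whenever $r(q') > r(q)$ would be forced, or more simply I observe that a full path is determined by its behavior in the single block where it leaves the identity routing, and in that block it goes (track $q$) $\to t \to$ (track $q'$) only for pairs I explicitly allow. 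I expect the main obstacle to be bookkeeping in Property~\eqref{en:Q-word-2}: getting the counts exactly right — \emph{exactly} $k$ full paths on each $q$-track, no more — while simultaneously threading each of the $k$ paths through the correct $G$-vertex and the correct suffix of $B$-vertices in the order dictated by the permutation $h(q)$, and making sure the "identity" routing of the other tracks through this block does not accidentally merge with the staircase paths to produce extra full paths. Since every individual unit graph is unconstrained, this is ultimately a finite, if fiddly, combinatorial verification, and I would organize it by first proving the claim for a single isolated track gadget and then arguing that concatenation with identity-routed blocks preserves the path structure.
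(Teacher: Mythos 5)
Your overall strategy matches the paper's: build $\scrG_f$ as a concatenation of explicit unit graphs, one family of blocks per track, with identity routing elsewhere and a bypass track to keep the blocks from interfering. But the two places where you defer the work are exactly the places where the construction can fail, and in one of them your proposed fix does not work as stated. For Property~\eqref{en:Q-word-1} you propose a \emph{single} shared $t$-gadget with an entry edge $q \to t$ for every $q$ at the top and an exit edge $t \to q'$ for every $q'$ at the bottom, and then say you will prune the unwanted direction by ``restricting the entry/exit edges or relying on Property~\eqref{en:Q-word-4}'s $R$-rank barrier.'' The second clause is circular --- Property~\eqref{en:Q-word-4} is a property your graph must \emph{establish}, not a constraint you can invoke to forbid paths that your edges create --- and the first clause cannot be implemented in one shared block, because once a path is on the common $b_1 \cdots b_k$ chain the graph has no memory of which track it entered from, so any entry can splice onto any exit, producing full paths from low $R$-rank to high $R$-rank and violating~\eqref{en:Q-word-4}. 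The paper avoids this by using a chain of $n-1$ blocks $\scrG^{(1)}_r, \ldots, \scrG^{(n-1)}_r$, where block $i$ contains only the single departure edge from $q_{m_i}$ and the single return edge to $q_{m_{i+1}}$ (rank decremented by exactly one); a pair $(q,q')$ with $r(q)>r(q')$ is then served by chaining consecutive blocks, and no rank-increasing full path can exist.

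The second gap is the exactness clause of Property~\eqref{en:Q-word-2}, which you correctly flag as the main obstacle but do not resolve. In your staircase gadget the $k$ departure points all sit on the same $q$-track within one block and each staircase returns \emph{directly} to the $q$-track. This creates two problems: (i) if the $q$-track carries identity edges through the whole block (which it must, so that path $k$ can reach its departure point and so that other tracks' blocks can be traversed), there is a $(k{+}1)$-st full path that never departs and visits no $G$-vertex at all; (ii) the cascades of different paths visit overlapping sets of $b$-vertices, so path $i'$'s cascade can hop onto path $i$'s return edge $b_{h(q)[i+1]} \to q$ and then onto a later departure edge, producing hybrid full paths with visit patterns not on your list of $k$. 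The paper's construction is engineered precisely against both: each of the $k$ departures lives in its own \emph{sequential} sub-block $\scrG^{(i,j)}_h$ so the staircases occupy disjoint level ranges and cannot splice; every departed path is forced onto the $t$-track, which has no exit until the single return edge $\langle t, q_i\rangle$ at the end of $\scrG^{(i)}_h$; and the $q_i$-track is deliberately broken at the start of the last sub-block (via the letters $Q(i)To^{-}G(j)$, $G(i)To^{-}T$, $TTo^{-}Q(i)$, which delete the horizontal edge $\langle q_i,q_i\rangle$), so every full path on the $q_i$-track must take exactly one departure. Without these three mechanisms --- or equivalents --- your ``finite, if fiddly, combinatorial verification'' of exactness would fail rather than merely be tedious.
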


\begin{example}[$Q$-Word]
\label{ex:Q-word}
Let us consider a full Streett automaton $\calS$ where $n=3$, $k=2$,
\begin{align*}
Q & = \{q_{0}, q_{1}, q_{2} \}, & T & =\{ t \}, &
P_{\mathrm{B}} & = \{b_{1}, b_{2} \}, & P_{\mathrm{G}} &= \{g_{1}, g_{2}\},
\end{align*}
and the following $Q$-ranking $f=\langle r,h \rangle$:
\begin{align*}
r(q_{0}) &= 2, & r(q_{1}) &= 1, & r(q_{2}) &= 3, &
h(q_{0}) &= \langle 1,2 \rangle, & h(q_{1}) &= \langle 1,2 \rangle, & h(q_{2}) &= \langle 2,1 \rangle \, .
\end{align*}
Figure~\ref{fig:Q-word} shows a $Q$-word $\scrG_{f}$, which consists of two subgraphs $\scrG_{r}$ and $\scrG_{h}$, where $\scrG_{r}$ in turn consists of two parts: $\scrG^{(1)}_{r}$ (level $0$ to level $3$) and $\scrG^{(2)}_{r}$ (level $3$ to $6$), and $\scrG_{h}$ in turn consists of three parts: $\scrG^{(0)}_{h}$ (level $6$ to level $12$), $\scrG^{(1)}_{h}$ (level $12$ to level $18$), and $\scrG^{(2)}_{h}$ (level $18$ to level $24$). $\scrG_{r}$ and $\scrG_{h}$ are aimed to satisfy Properties~\eqref{en:Q-word-1} and~\eqref{en:Q-word-2}, respectively.

The $R$-rank (numeric rank) of every level of $\scrG_{r}$ is $(2,1,3)$. In $\scrG^{(1)}_{r}$, a full path $\varrho_{r}$ starts from $\langle q_{2}, 0 \rangle$ whose $R$-rank is the highest. The path visits $\langle b_{1}, 1 \rangle$, $\langle b_{2}, 2 \rangle$ and then $\langle q_{0}, 3 \rangle$ whose $R$-rank is one less than that of $q_{2}$. Similarly in $\scrG^{(2)}_{r}$, the path continues from $\langle q_{2}, 3 \rangle$, visits $\langle b_{1}, 4 \rangle$, $\langle b_{2}, 5 \rangle$ and ends at $\langle q_{1}, 6 \rangle$ whose $R$-rank is one less than that of $q_{0}$.

The $H$-rank (index rank) of every level of $\scrG_{h}$ is $(\langle 1,2\rangle, \langle 1,2\rangle, \langle 2,1\rangle)$. Let us take a look at $\scrG^{(1)}_{h}$. A full path $\varrho_{h}$ (marked green except the last edge) starts at $\langle q_{1}, 12 \rangle$, visits $\langle b_{2}, 13 \rangle$ and $\langle g_{1}, 14 \rangle$ (because of $h(q_{1})[1]=1$), and enters $t$-track (the bypass track $\{t\} \times \bbN$) at $\langle t, 15 \rangle$, from where it stays on $t$-track till reaching $\langle t, 17 \rangle$. Another full path $\varrho'_{h}$ (marked red except the last edge) starts at $\langle q_{1}, 12 \rangle$ too, takes $q_{1}$-track to $\langle q_{1}, 15 \rangle$, and then visits $\langle g_{2}, 16 \rangle$ (because of $h(q_{1})[2]=2$), and enters $t$-track at $\langle t, 17 \rangle$. Both $\varrho_{h}$ and $\varrho'_{h}$ return to $q_{1}$-track at $\langle q_{1}, 18 \rangle$ using the edge $\langle \langle t,17\rangle, \langle q_{1},18\rangle\rangle$ (marked blue). By $\varrho_{0\to6}$, $\varrho_{6\to12}$ and $\varrho_{18\to24}$ (all marked blue) we denote the $q_{1}$-tracks in $\scrG_{r}$, in $\scrG^{(0)}_{h}$ and in $\scrG^{(2)}_{h}$, respectively. It is easy to verify that Property~\eqref{en:Q-word-1} with respect to $q_{2}$ and $q_{1}$ is satisfied by both $\varrho_{r} \cat \varrho_{6\to12} \cat \varrho_{h} \cat \varrho_{18\to24}$ and $\varrho_{r} \cat \varrho_{6\to12} \cat \varrho_{h'} \cat \varrho_{18\to24}$. Also easily seen is that Property~\eqref{en:Q-word-2} with respect to $q_{1}$ is satisfied by $\varrho_{0\to6} \cat \varrho_{6\to12} \cat \varrho_{h} \cat \varrho_{18\to24}$ and $\varrho_{0\to6} \cat \varrho_{6\to12} \cat \varrho_{h'} \cat \varrho_{18\to24}$.
\end{example}

We are ready for the lower bound proof. Let $J \subseteq I$. We use $\langle G,B\rangle_{J}$ to denote the Streett condition with respect to only indices in $J$. The corresponding Rabin condition $[G,B]_{J}$ is similarly defined. When $J$ is a singleton, say $J=\{j\}$, we simply write $\langle G(j),B(j)\rangle$ for $\langle G,B\rangle_{J}$ and $[G(j),B(j)]$ for $[G,B]_{J}$. Obviously, if an infinite run satisfies $\langle G,B\rangle_{J}$ (resp. $[G,B]_{J}$), then the run also satisfies $\langle G,B\rangle_{J'}$ (resp. $[G,B]_{J'}$) for $J' \subseteq J$ (resp. $J \subseteq J' \subseteq I$).

\begin{lemma}
\label{lem:Rabin-nature}
For every $Q$-ranking $f$, $(\scrG_{f})^{\omega} \not \in \scrL(\calS)$.
\end{lemma}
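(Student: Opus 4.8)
The plan is to show that no infinite path in $(\scrG_{f})^{\omega}$ can satisfy the Streett condition $\langle G,B\rangle_{I}$. Fix a $Q$-ranking $f=\langle r,h\rangle$ and let $\varrho=\varrho_{0}\cat\varrho_{1}\cat\cdots$ be an arbitrary infinite path in $(\scrG_{f})^{\omega}$, where $\varrho_{t}$ is a full path in the $t$-th copy of $\scrG_{f}$. Each $\varrho_{t}$ runs from some vertex $\langle q^{(t)},0\rangle$ to $\langle q^{(t+1)},|\scrG_{f}|\rangle$; by Properties~\eqref{en:Q-word-3} and~\eqref{en:Q-word-4}, the only full paths of $\scrG_{f}$ are those prescribed by~\eqref{en:Q-word-1} (which strictly decrease $R$-rank) and~\eqref{en:Q-word-2} (which stay on a single $q$-track). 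Since $R$-ranks live in the finite set $[1..n]$ and cannot decrease forever, after finitely many steps every $\varrho_{t}$ must be of the type~\eqref{en:Q-word-2}; hence there is a fixed $q=q_{i}\in Q$ and a tail index $t_{0}$ such that for all $t\ge t_{0}$, $\varrho_{t}$ is one of the $k$ parallel full paths $\varrho_{1},\dots,\varrho_{k}$ on the $q_{i}$-track described in~\eqref{en:Q-word-2}.

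Next I would read off the ``obligation'' behavior of these $k$ paths. By~\eqref{en:Q-word-2}, for each index $m\in I$ the path $\varrho_{m}$ visits $G(h(q_{i})[m])$ and, among $B(h(q_{i})[1]),\dots,B(h(q_{i})[k])$, visits exactly those $B(h(q_{i})[j])$ with $j>m$ — in particular it never visits $B(h(q_{i})[m])$ (nor any $B(h(q_{i})[j])$ with $j<m$). Now let $m^{*}$ be the \emph{smallest} value of $m$ such that $\varrho_{m}$ occurs infinitely often among $\{\varrho_{t}:t\ge t_{0}\}$ (such $m^{*}$ exists since infinitely many $t$'s remain and only $k$ path-types are available). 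Set $i^{*}=h(q_{i})[m^{*}]$, the distinguished index. I claim $\varrho$ witnesses a violation of $\langle G(i^{*}),B(i^{*})\rangle$, hence of $\langle G,B\rangle_{I}$: the path $\varrho_{m^{*}}$ visits $G(i^{*})$ and occurs infinitely often, so $\Inf(\varrho)\cap G(i^{*})\neq\emptyset$; on the other hand, no path $\varrho_{m}$ with $m\ge m^{*}$ visits $B(i^{*})$ (since $B(h(q_{i})[m])$ is visited by $\varrho_{m'}$ only when $m'<m$, and $h$ restricted to $q_{i}$ is injective so $i^{*}=h(q_{i})[m^{*}]$ corresponds to the single coordinate $m^{*}$), and by minimality of $m^{*}$ all path-types $\varrho_{m}$ with $m<m^{*}$ occur only finitely often. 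Therefore $B(i^{*})=\{b_{i^{*}}\}$ is visited only finitely often along $\varrho$, i.e. $\Inf(\varrho)\cap B(i^{*})=\emptyset$. Thus $\varrho$ does not satisfy the conjunct for index $i^{*}$, so $\varrho$ is not accepting.

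Since $\varrho$ was an arbitrary infinite path in $(\scrG_{f})^{\omega}$ starting at some initial state (all its vertices at level $0$ of the first copy are $Q$-vertices by~\eqref{en:Q-word-3}, which are the initial states), no run of $\calS$ on $(\scrG_{f})^{\omega}$ is accepting, giving $(\scrG_{f})^{\omega}\notin\scrL(\calS)$. The one place needing care — the main obstacle — is the bookkeeping that pins down exactly which $B$-sets the infinitely-recurring path-types touch: one must combine (a) the ``no full path increases $R$-rank'' consequence of~\eqref{en:Q-word-4} to force the tail to lie on a single track, (b) the precise visit pattern in~\eqref{en:Q-word-2} together with injectivity of $h(q_{i})$ as a permutation of $I$ to identify $i^{*}$, and (c) a clean pigeonhole/minimality argument to guarantee $\varrho_{m^{*}}$ recurs while no $B(i^{*})$-visiting type recurs. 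Everything else is routine. (Note this argument also explains the lemma label: the single offending index $i^{*}$ is precisely a Rabin pair $[G(i^{*}),B(i^{*})]$ satisfied by $\varrho$, the ``Rabin nature'' of $(\scrG_{f})^{\omega}$.)
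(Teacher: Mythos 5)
Your proof is correct and follows essentially the same route as the paper's: stabilize the $R$-rank via Properties~\eqref{en:Q-word-3} and~\eqref{en:Q-word-4} to pin the tail of $\varrho$ onto a single $q_{i}$-track, then use the exactness clause of Property~\eqref{en:Q-word-2} and a minimality argument on the recurring path-type (your $m^{*}$ is the paper's $k_{0}$) to exhibit an index whose $G$-set recurs while its $B$-set does not. The only cosmetic difference is your slightly overstated aside that the \emph{only} full paths are those prescribed by~\eqref{en:Q-word-1} and~\eqref{en:Q-word-2}, which is not needed and not literally what those properties assert; the steps you actually use are exactly the ones the paper uses.
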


\begin{proof}
Let $f=\langle r,h \rangle$, $\scrG = (\scrG_{f})^{\omega}$ and $\varrho$ an infinite path in $\scrG$. For simplicity, we assume $\varrho$ only lists states appearing on the boundaries of $\scrG_{f}$ fragments; for any $j \ge 0$, $\varrho(j)$ (resp. $\varrho(j+1)$) is a state in the first (resp. last) level of the $j$-th $\scrG_{f}$ fragment. Let $\varrho[j, j+1]$ denote the finite fragment from $\varrho(j)$ to $\varrho(j+1)$. Let $\varrho[j, \infty]$ denote the suffix of $\varrho$ beginning from $\varrho(j)$.

By Property~\eqref{en:Q-word-3}, $\varrho(i) \in Q$ for $i \ge 0$. By Property~\eqref{en:Q-word-4}, $\varrho$ eventually stabilizes on $R$-ranks in the sense that there exists a $j_{0}$ such that for any $j \ge j_{0}$, $r(\varrho(j))=r(\varrho(j+1))$. Because every level of $\scrG$ has the same rank, $\varrho$ stabilizes on a (horizontal) track after $j_{0}$, i.e., there exists $i \in [n]$ such that $\varrho(j) = q_{i}$ for $j \ge j_{0}$. Property~\eqref{en:Q-word-2} says that there are \emph{exactly} $k$ \emph{full} paths $\varrho_{1}, \ldots, \varrho_{k}$ from $\langle q_{i},0\rangle$ to $\langle q_{i}, |\scrG_{f}|\rangle$ in $\scrG_{f}$. Therefore, $\varrho[j_{0}, \infty]$ can be divided into the infinite sequence $\varrho[j_{0}, j_{0}+1], \varrho[j_{0}+1, j_{0}+2], \ldots$, each of which is one of $\varrho_{1}, \ldots, \varrho_{k}$. Let $k_{0} \in I$ be the smallest index such that $\varrho_{k_{0}}$ appears infinitely often in this sequence, i.e., for some $j_{1} \ge j_{0}$, none of $\varrho_{1}, \ldots, \varrho_{k_{0}-1}$ appears in $\varrho[j_{1}, \infty]$. By Property~\eqref{en:Q-word-2} again, $\varrho[j_{1}, \infty]$ visits none of $B(h(q_{i})[1]), \ldots, B(h(q_{i})[k_{0}])$, but visits $G(h(q_{i})[k_{0}])$ infinitely often (because $\varrho_{k_{0}}$ appears infinitely often). In particular, $\varrho$ satisfies $[G(t),B(t)]$ for $t=h(q_{i})[k_{0}]$ and hence $[G,B]_{I}$. Because $\varrho$ is chosen arbitrarily, we have $\scrG \not \in \scrL(\calS)$.
\end{proof}

\begin{lemma}
\label{lem:Streett-potential}
For every two different $Q$-rankings $f$ and $f'$, $((\scrG_{f})^{+} \cat (\scrG_{f'})^{+})^{\omega} \subseteq \scrL(\calS)$.
\end{lemma}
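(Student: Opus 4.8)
The plan is to prove the inclusion pointwise: I would fix an arbitrary $\scrG \in ((\scrG_{f})^{+}\cat(\scrG_{f'})^{+})^{\omega}$ and exhibit one accepting run of $\calS$ on it; since $\scrG$ is arbitrary this gives $\scrG \in \scrL(\calS)$. Write $\scrG$ as a strictly alternating concatenation of \emph{blocks}, each block being one or more consecutive copies of $\scrG_{f}$ or of $\scrG_{f'}$, with infinitely many blocks of each kind. Writing $f = \langle r,h\rangle$ and $f' = \langle r',h'\rangle$, I would distinguish two cases according to whether the index rankings differ or only the numeric rankings do; the first is the substantive one.

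\emph{Case $h \neq h'$.} Pick $m \in [n]$ with $h(q_{m}) \neq h'(q_{m})$. Since these are distinct permutations of $I$, some pair $u, w \in I$ has its relative order reversed between them --- say $u$ precedes $w$ in $h(q_{m})$ but $w$ precedes $u$ in $h'(q_{m})$ --- for otherwise every pair of indices agrees and the permutations coincide. Let $a$ be the position of $u$ in $h(q_{m})$ and $b$ the position of $w$ in $h'(q_{m})$. I would take the run to follow, on each copy of $\scrG_{f}$, the $a$-th of the $k$ parallel full paths on the $q_{m}$-track provided by Property~\eqref{en:Q-word-2}, and, on each copy of $\scrG_{f'}$, the $b$-th such path on the $q_{m}$-track of $\scrG_{f'}$; both paths run from $q_{m}$ at the first level to $q_{m}$ at the last level, so this produces a genuine run starting at $q_{m} \in Q$. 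The key computation is then: by Property~\eqref{en:Q-word-2}, on a $\scrG_{f}$-copy the chosen path never visits $G(h(q_{m})[j])$ for $j < a$ and does visit $B(h(q_{m})[j])$ for every $j > a$, so the only index $\ell$ for which it visits $G(\ell)$ without visiting $B(\ell)$ is $\ell = u$; symmetrically, on a $\scrG_{f'}$-copy the only such index is $\ell = w$. But the reversal of the relative order of $u$ and $w$ forces $u$ into a position $> b$ in $h'(q_{m})$ and $w$ into a position $> a$ in $h(q_{m})$, so the $\scrG_{f'}$-path visits $B(u)$ and the $\scrG_{f}$-path visits $B(w)$. Since each graph occurs in infinitely many copies, every $\ell$ for which the run visits $G(\ell)$ infinitely often also has $B(\ell)$ visited infinitely often, so the run satisfies $\langle G,B\rangle_{I}$ and $\scrG \in \scrL(\calS)$.

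\emph{Case $h = h'$, $r \neq r'$.} Since $r$ and $r'$ are distinct bijections $Q \to [1..n]$, there are $q, q' \in Q$ with $r(q) > r(q')$ and $r'(q') > r'(q)$. By Property~\eqref{en:Q-word-1}, $\scrG_{f}$ has a full path from $q$ to $q'$ visiting all of $B(1), \ldots, B(k)$, and $\scrG_{f'}$ has a full path from $q'$ to $q$ visiting all of $B(1), \ldots, B(k)$. I would build the run so that it takes the former path on the first copy of every $\scrG_{f}$-block and the latter on the first copy of every $\scrG_{f'}$-block, and on the remaining copies inside a block it stays on the $q'$-track (inside a $\scrG_{f}$-block) or on the $q$-track (inside a $\scrG_{f'}$-block) via one of the parallel paths of Property~\eqref{en:Q-word-2}; the block boundaries match up ($q'$ in and out around $\scrG_{f}$-blocks, $q$ around $\scrG_{f'}$-blocks) and the run starts at $q \in Q$. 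Because the first copy of each of the infinitely many blocks is traversed by a path hitting all of $B(1), \ldots, B(k)$, every $B(\ell)$ is visited infinitely often, so $\langle G,B\rangle_{I}$ holds trivially and $\scrG \in \scrL(\calS)$.

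I expect the case $h \neq h'$ to be the main obstacle: its heart is to extract from ``the permutations differ'' exactly the reversed pair $u, w$, and then to check, against the precise statement of Property~\eqref{en:Q-word-2}, that the two selected parallel paths mutually discharge the single obligation the other leaves open while creating no new undischarged obligation. The case $h = h'$ is comparatively routine once one notices that Property~\eqref{en:Q-word-1} makes it possible to hit every $B(\ell)$ infinitely often; there the hypothesis $r \neq r'$ is used only to guarantee that the descending paths in $\scrG_{f}$ and $\scrG_{f'}$ can be chained together consistently.
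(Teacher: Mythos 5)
Your proof is correct and follows essentially the same strategy as the paper's: the same case split on $r \neq r'$ versus $h \neq h'$, with Property~\eqref{en:Q-word-1} supplying the all-$B$-visiting descending paths chained through a common state in the first case, and Property~\eqref{en:Q-word-2} supplying a pair of mutually discharging parallel paths on a common track in the second. The only (harmless) divergence is in how the second case picks those two paths: the paper uses the \emph{same} position $j$ (the first one where $h(q_i)$ and $h'(q_i)$ disagree) in both graphs and concludes via a set equality of the two permutation tails, whereas you extract a pair $u,w$ with reversed relative order and use its possibly different positions $a$ and $b$; both selections yield exactly the required mutual discharge.
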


\begin{proof}
Let $\scrG \in ((\scrG_{f})^{+} \cat (\scrG_{f'})^{+})^{\omega}$ be an $\omega$-word where both $\scrG_{f}$ and $\scrG_{f'}$ occur infinitely often in $\scrG$. Let $f = \langle r,h \rangle$ and $f'=\langle r',h' \rangle$. We have two cases: either $r \not = r'$ or $h \not = h'$.

If $r \not = r'$. Since both $r$ and $r'$ are one-to-one functions from $Q$ to $[1..n]$, there must be $i,j \in [n]$ such that $r(q_{i})>r(q_{j})$ and $r'(q_{j})>r'(q_{i})$. By Property~\eqref{en:Q-word-1}, $\scrG_{f}$ contains a full path $\varrho_{i \to j}$ from $\langle q_{i}, 0 \rangle$ to $\langle q_{j}, |\scrG_{f}|\rangle$ that visits all of $B(1), \ldots, B(k)$. By the same property, $\scrG_{f'}$ contains a path $\varrho'_{j \to i}$ from $\langle q_{j}, 0 \rangle$ to $\langle q_{i}, |\scrG_{f'}|\rangle$ that also visits all of $B(1), \ldots, B(k)$. Then $\varrho_{i \to j} \cat \varrho'_{j \to i}$ is a path in $\scrG_{f} \cat \scrG_{f'}$ that visits all of $B(1), \ldots, B(k)$. Also by Property~\eqref{en:Q-word-2}, $\scrG_{f}$ (resp. $\scrG_{f'}$) contains a path $\varrho_{i \to i}$ (resp. $\varrho'_{i \to i}$) from $\langle q_{i}, 0 \rangle$ to $\langle q_{i}, |\scrG_{f}|\rangle$ (resp. from $\langle q_{i}, 0 \rangle$ to $\langle q_{i}, |\scrG_{f'}|\rangle$).

Now we define an infinite path $\hat{\varrho}$ in $\scrG$ as follows. We pick the finite path $\varrho_{i \to i}$ in every $\scrG_{f}$ fragment and $\varrho'_{i \to i}$ in every $\scrG_{f'}$ fragment, except that in the case where a $\scrG_{f}$ fragment is followed immediately by a $\scrG_{f'}$ fragment, we pick $\varrho_{i \to j}$ in the preceding $\scrG_{f}$ and $\varrho'_{j \to i}$ in the following $\scrG_{f'}$. It is easily seen that $\hat{\varrho}$, in the form
\begin{align*}
((\varrho_{i \to i})^{*} \cat (\varrho_{i \to j} \cat \varrho'_{j \to i})^{+} \cat (\varrho'_{i \to i})^{*})^{\omega} \, ,
\end{align*}
visits all of $B(1), \ldots, B(k)$ infinitely often, and hence it satisfies the Streett condition $\langle G,B\rangle_{I}$.

If $h \not = h'$. Then there exist $i \in [n]$, $j \in I$ such that $h(q_{i})[j] \not = h'(q_{i})[j]$ and $h(q_{i})[j^{*}]=h'(q_{i})[j^{*}]$ for $j^{*} \in [1..j-1]$. Since both $h(q_{i})$ and $h'(q_{i})$ are permutations of $I$, we have $j<k$ and
\begin{align}
\label{eq:set-equal}
\{\ h(q_{i})[j^{*}]  \ \mid \ j^{*} \in [j..k] \ \} = \{\ h'(q_{i})[j^{*}] \ \mid \ j^{*} \in [j..k] \ \} \, .
\end{align}
By Property~\eqref{en:Q-word-2}, in $\scrG_{f}$ there exists a path $\varrho_{i \to i}$ from $\langle q_{i}, 0 \rangle$ to $\langle q_{i}, |\scrG_{f}|\rangle$ that visits none of $G(h(q_{i})[j^{*}])$ for $j^{*} \in [1..j-1]$, but visits all of $B(h(q_{i})[j^{*}])$ for $j^{*} \in [j+1..k]$. Similarly, in $\scrG_{f'}$ there exists a path $\varrho'_{i \to i}$ from $\langle q_{i}, 0 \rangle$ to $\langle q_{i}, |\scrG_{f'}|\rangle$ that visits none of $G(h'(q_{i})[j^{*}])$ for $j^{*} \in [1..j-1]$, but visits all of $B(h'(q_{i})[j^{*}])$ for $j^{*} \in [j+1..k]$. Because $h(q_{i})$ and $h'(q_{i})$ are different permutations of $I$, $h'(q_{i})[j]=h(q_{i})[j_{0}]$ for some $j_{0} \in [j+1..k]$ and $h(q_{i})[j]=h'(q_{i})[j_{1}]$ for some $j_{1} \in [j+1..k]$. It follows that both sides of~\eqref{eq:set-equal} are equal to
\begin{align*}
  &\ \ \{\ h(q_{i})[j^{*}]  \ \mid \ j^{*} \in [j+1..k] \ \} \cup \{\ h'(q_{i})[j^{*}] \ \mid \ j^{*} \in [j+1..k] \ \}  \, .
\end{align*}
Therefore $\varrho_{i \to i} \cat \varrho'_{i \to i}$ (in $\scrG_{f} \cat \scrG_{f'}$) visits all of $B(h(q_{i})[j^{*}])$ for $j^{*} \in [j..k]$.

Now let $\hat{\varrho}$ be defined as follows: $\hat{\varrho}$ takes $\varrho_{i \to i}$ in every $\scrG_{f}$ fragment and $\varrho'_{i \to i}$ in every $\scrG_{f'}$ fragment. That is, $\hat{\varrho}$ takes the following form
\begin{align*}
((\varrho_{i \to i})^{+} \cat (\varrho'_{i \to i})^{+})^{\omega} \, .
\end{align*}
Recall that $h(q_{i})[j^{*}]=h'(q_{i})[j^{*}]$ for $j^{*} \in [1..j-1]$. It follows that $\hat{\varrho}$ does not visit any of $G(h(q_{i})[j^{*}])$ for $j^{*} \in [1..j-1]$ because neither $\varrho_{i \to i}$ nor $\varrho'_{i \to i}$ does.  Also since both $\scrG_{f}$ and $\scrG_{f'}$ occur infinitely often in $\scrG$, $\hat{\varrho}$ contains infinitely many $\varrho_{i \to i} \cat \varrho'_{i \to i}$, which implies that $\hat{\varrho}$ visits all of $B(h(q_{i})[j^{*}])$ for $j^{*} \in [j..k]$ infinitely often. Since $h(q_{i})$ is a permutation of $I$, $\hat{\varrho}$ satisfies $\langle G,B\rangle_{I}$.

In either case (whether $r \not = r'$ or $h \not = h'$), $\scrG$ contains a path that satisfies $\langle G,B\rangle_{I}$, which means $\scrG \in \scrL(\calS)$. Because $\scrG$ is arbitrarily chosen, we have $((\scrG_{f})^{+} \cat (\scrG_{f'})^{+})^{\omega} \subseteq \scrL(\calS)$.
\end{proof}

The following lemma is the core of Michel's scheme~\cite{Mic88,Lod99}, recast in the setting of full automata with rankings~\cite{Yan06,CZL09}. Recall that $\calD^{Q}$ denotes the set of all $Q$-rankings and $|\calD^{Q}|$ denotes the cardinality of $\calD^{Q}$.
\begin{lemma}
\label{lem:cost-of-separation}
A union-closed automaton that complements $\calS$ must have at least $|\calD^{Q}|$ states.
\end{lemma}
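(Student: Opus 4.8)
The plan is to run Michel's scheme in the full-automaton setting. Suppose, for contradiction, that $\calC$ is a union-closed automaton with $\scrL(\calC) = \overline{\scrL(\calS)}$ and with fewer than $|\calD^{Q}|$ states. For each $Q$-ranking $f \in \calD^{Q}$, Lemma~\ref{lem:Rabin-nature} gives $(\scrG_{f})^{\omega} \not\in \scrL(\calS)$, hence $(\scrG_{f})^{\omega} \in \scrL(\calC)$, so $\calC$ has an accepting run $\varrho^{f}$ on $(\scrG_{f})^{\omega}$. Cut $\varrho^{f}$ at the level boundaries between consecutive $\scrG_{f}$-fragments; since $\calC$ has finitely many states, by the pigeonhole principle some state $s_{f}$ of $\calC$ occurs infinitely often at these boundaries, and moreover we may choose an initial segment after which \emph{only} states that occur infinitely often appear at boundaries. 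Discarding that finite prefix, we obtain for each $f$ a state $s_{f}$ and a decomposition of (a suffix of) $\varrho^{f}$ into finite fragments $s_{f} \xrightarrow{\scrG_{f}^{m_{0}}} s_{f} \xrightarrow{\scrG_{f}^{m_{1}}} s_{f} \cdots$ (each exponent $m_{t} \ge 1$), all of which are pieces of the accepting run $\varrho^{f}$. In particular, for each $f$ there are finite runs of $\calC$ of the form $s_{f} \xrightarrow{(\scrG_{f})^{+}} s_{f}$ that, concatenated infinitely, yield an accepting run; call the infinite iteration of any one such fragment a ``good loop at $s_{f}$.''

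Next I would use the cardinality assumption. Since $\calC$ has fewer than $|\calD^{Q}|$ states, the map $f \mapsto s_{f}$ is not injective: there exist two distinct $Q$-rankings $f$ and $f'$ with $s_{f} = s_{f'} =: s$. Now I weave the two loops together. From the decomposition of $\varrho^{f}$, pick one fragment $\pi$ of the form $s \xrightarrow{(\scrG_{f})^{+}} s$ that lies along the accepting run $\varrho^{f}$; from $\varrho^{f'}$ pick one fragment $\pi'$ of the form $s \xrightarrow{(\scrG_{f'})^{+}} s$ along $\varrho^{f'}$. Then $(\pi \cat \pi')^{\omega}$ is an infinite run of $\calC$ (the transitions compose because both $\pi$ and $\pi'$ start and end at $s$), and it reads a word $w \in ((\scrG_{f})^{+} \cat (\scrG_{f'})^{+})^{\omega}$. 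The key point is that $\Inf((\pi\cat\pi')^{\omega})$ is the union of the state sets visited by $\pi$ and by $\pi'$; on the other hand, the pure loops $\pi^{\omega}$ and $(\pi')^{\omega}$ are accepting runs of $\calC$ (being infinite tails of $\varrho^{f}$ and $\varrho^{f'}$, which are accepting — here I use that being accepting depends only on $\Inf(\cdot)$, and $\Inf(\pi^{\omega})$ equals the eventual $\Inf$ of $\varrho^{f}$ by our choice of segment). Since $\calC$ is union-closed and $\Inf((\pi\cat\pi')^{\omega}) = \Inf(\pi^{\omega}) \cup \Inf((\pi')^{\omega})$, the woven run $(\pi\cat\pi')^{\omega}$ is also accepting. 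Hence $w \in \scrL(\calC) = \overline{\scrL(\calS)}$, i.e. $w \not\in \scrL(\calS)$. But $w \in ((\scrG_{f})^{+} \cat (\scrG_{f'})^{+})^{\omega}$, which by Lemma~\ref{lem:Streett-potential} is contained in $\scrL(\calS)$ — contradiction. Therefore $\calC$ must have at least $|\calD^{Q}|$ states.

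A few details need care. First, to apply union-closedness cleanly I want $\Inf(\pi^{\omega})$, $\Inf((\pi')^{\omega})$, and $\Inf((\pi\cat\pi')^{\omega})$ to be exactly the state sets traversed by the finite fragments, so I must choose the fragments $\pi,\pi'$ from the part of $\varrho^{f},\varrho^{f'}$ beyond the point where the boundary state has stabilized to $s$ and where no transient states recur; standard pigeonhole bookkeeping on runs of the finite automaton $\calC$ achieves this (this is exactly the recasting of Michel's scheme referenced before the lemma). Second, I should note that $w$ genuinely has both $\scrG_{f}$ and $\scrG_{f'}$ occurring infinitely often, so that Lemma~\ref{lem:Streett-potential} applies as stated; this is immediate from the form $(\pi\cat\pi')^{\omega}$ with $\pi$ reading a nonempty power of $\scrG_{f}$ and $\pi'$ a nonempty power of $\scrG_{f'}$. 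Third, $f \ne f'$ is what licenses invoking Lemma~\ref{lem:Streett-potential} (which requires distinct rankings), and it is exactly what the cardinality hypothesis buys us.

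I expect the main obstacle to be the bookkeeping in the first paragraph: extracting, from an arbitrary accepting run of $\calC$ on $(\scrG_{f})^{\omega}$, a \emph{single} finite loop $s \xrightarrow{(\scrG_{f})^{+}} s$ whose infinite repetition is still accepting and whose $\Inf$ set is controlled well enough that union-closedness can be applied to the woven run. Everything downstream — the non-injectivity of $f \mapsto s_{f}$, the weaving, and the final contradiction via Lemmas~\ref{lem:Rabin-nature} and~\ref{lem:Streett-potential} — is then essentially formal.
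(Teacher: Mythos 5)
Your proof is correct and follows essentially the same route as the paper: Michel's scheme with a pigeonhole on boundary states, weaving two loops $\pi$ and $\pi'$ through a common state $s$, choosing each loop to visit exactly $\Inf(\varrho^{f})$ resp.\ $\Inf(\varrho^{f'})$, and invoking union-closedness to contradict Lemma~\ref{lem:Streett-potential} (the paper merely phrases the pigeonhole contrapositively, showing the sets $\Inf(\varrho_{0})$ are pairwise disjoint nonempty subsets of the state set). The one detail to add is to prepend a finite run $q_{0} \xrightarrow{(\scrG_{f})^{m_{0}}} s$ from an initial state of $\calC$ before iterating $\pi \cat \pi'$, so that the woven run is a genuine run of $\calC$; this changes neither its $\Inf$ set nor the fact that the word read lies in $((\scrG_{f})^{+} \cat (\scrG_{f'})^{+})^{\omega}$.
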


\begin{proof}
Let $\calC$ be a union-closed automaton that complements $\calS$. By Lemma~\ref{lem:Rabin-nature}, for every $Q$-ranking $f$, $(\scrG_{f})^{\omega} \in \scrL(\calC)$. Let $f$, $f'$ be two different $Q$-rankings and $\scrG_{f}$ and $\scrG_{f'}$ the corresponding $Q$-words. Let $\varrho$ and $\varrho'$ be the corresponding accepting runs of $(\scrG_{f})^{\omega}$ and $(\scrG_{f'})^{\omega}$, respectively. Also let $\varrho_{0}$ and $\varrho_{0}'$, respectively, be the accepting runs of $(\scrG_{f})^{\omega}$ and $(\scrG_{f'})^{\omega}$ when we treat $\scrG_{f}$ and $\scrG_{f'}$ as \emph{atomic} letters, that is, $\varrho_{0}$ (resp. $\varrho'_{0}$) only records states visited at the boundary of $\scrG_{f}$ (resp. $\scrG_{f'}$) and is a subsequence of $\varrho$ (resp. $\varrho'$). Obviously, $\Inf(\varrho_{0}) \subseteq \Inf(\varrho)$, $\Inf(\varrho'_{0}) \subseteq \Inf(\varrho')$, $\Inf(\varrho_{0}) \not = \emptyset$ and $\Inf(\varrho'_{0}) \not = \emptyset$. If $\Inf(\varrho_{0}) \cap \Inf(\varrho'_{0}) = \emptyset$ for any pair of $f$ and $f'$, then clearly $\calC$ has at least $|\calD^{Q}|$ states because the state set of $\calC$ contains $|\calD^{Q}|$ pairwise disjoint nonempty subsets.

Therefore we can assume that $\Inf(\varrho_{0}) \cap \Inf(\varrho'_{0}) \not = \emptyset$ for a fixed pair of $f$ and $f'$. Let $q$ be a state in $\Inf(\varrho_{0}) \cap \Inf(\varrho'_{0})$. Because $q$ occurs infinitely often in $\varrho$, then for some $m > 0$, there exists a path in $(\scrG_{f})^{m}$ that goes from $q$ to $q$ and visits exactly all states in $\Inf(\varrho)$ (or equivalently speaking, $\calC$, upon reading the input word $(\scrG_{f})^{m}$, runs from state $q$ to $q$, visiting exactly all states in $\Inf(\varrho)$ during the run). By $q \xrightarrow[! \Inf(\varrho)]{(\scrG_{f})^{m}} q$ we denote the existence of such a path. Similarly, we have $q \xrightarrow[!\Inf(\varrho')]{(\scrG_{f'})^{m'}} q$ for some $m' > 0$. Also we have $q_{0} \xrightarrow{(\scrG_{f})^{m_{0}}} q$ where $q_{0}$ is an initial state of $\calC$. Now consider the following infinite run $\varrho^{*}$ in the form
\begin{multline*}
q_{0} \xrightarrow{(\scrG_{f})^{m_{0}}} q \xrightarrow[!\Inf(\varrho)]{(\scrG_{f})^{m}} q \xrightarrow[!\Inf(\varrho')]{(\scrG_{f'})^{m'}} q
\xrightarrow[!\Inf(\varrho)]{(\scrG_{f})^{m}} q \xrightarrow[!\Inf(\varrho')]{(\scrG_{f'})^{m'}} q \cdots
\end{multline*}
which is an accepting run of $\calC$ for $(\scrG_{f})^{m_{0}} \cat ((\scrG_{f})^{m} \cat (\scrG_{f'})^{m'})^{\omega}$ because $\Inf(\varrho^{*}) = \Inf(\varrho) \cup \Inf(\varrho')$. However, by Lemma~\ref{lem:Streett-potential}, $(\scrG_{f})^{m_{0}} \cat ((\scrG_{f})^{m} \cat (\scrG_{f'})^{m'})^{\omega} \in ((\scrG_{f})^{+} \cat (\scrG_{f'})^{+})^{\omega} \subseteq \scrL(\calS)$, a contradiction.
\end{proof}


\begin{theorem}
\label{thm:lower-bound}
Streett complementation is in $2^{\Omega(n\lg n + k n\lg k)}$ for $k=O(n)$ and in $2^{\Omega(n^{2}\lg n)}$ for $k=\omega(n)$, where $n$ and $k$ are the state size and index size of a complementation instance.
\end{theorem}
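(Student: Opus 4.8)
The plan is to treat Theorem~\ref{thm:lower-bound} as the packaging of the preceding development: all the combinatorial content already sits in Theorem~\ref{thm:existence} and Lemmas~\ref{lem:Rabin-nature}, \ref{lem:Streett-potential} and~\ref{lem:cost-of-separation}, and what remains is to reparametrize and count. First I would record what Lemma~\ref{lem:cost-of-separation} gives: for every choice of \emph{effective} parameters $\hat n,\hat k>0$, the full Streett automaton $\calS$ of Definition~\ref{def:FS} has exactly $|S|=\hat n+2\hat k+1$ states and index size $\hat k$, and every union-closed automaton complementing $\calS$ has at least $|\calD^{Q}|=(\hat n!)(\hat k!)^{\hat n}$ states. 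Since any Streett complementation construction that outputs union-closed automata (as B\"{u}chi, generalized B\"{u}chi and Streett automata all are) must, on input $\calS$, produce exactly such an automaton, this already lower-bounds the blow-up of complementing $\calS$; it only remains to fit $\calS$ inside an instance of the advertised size and to estimate $(\hat n!)(\hat k!)^{\hat n}$.

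Next I would choose $\hat n,\hat k$ as large as the budget $\hat n+2\hat k+1\le n$, $\hat k\le k$ allows, in the two regimes. For $k=O(n)$, set $\hat k=\min\{k,\lfloor n/4\rfloor\}$ and $\hat n=n-2\hat k-1$; then $n/2-1\le\hat n\le n$, so $\hat n=\Theta(n)$, and since $k=O(n)$ one checks $\hat k=\Theta(\min\{n,k\})=\Theta(k)$. For $k=\omega(n)$, set $\hat k=\lfloor n/4\rfloor$ and $\hat n=n-2\hat k-1$, so $\hat n=\Theta(n)$ and $\hat k=\Theta(n)$. In both cases the $\calS$ built from $\hat n,\hat k$ is a Streett automaton of state size at most $n$ and index size at most $k$. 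Finally I would apply Stirling in the form $\lg(m!)=\Theta(m\lg m)$ (with $\lg 1=0$) to obtain
\begin{align*}
\lg|\calD^{Q}|=\lg(\hat n!)+\hat n\lg(\hat k!)=\Theta\big(\hat n\lg\hat n+\hat n\,\hat k\lg\hat k\big),
\end{align*}
and substitute: $\hat n=\Theta(n)$, $\hat k=\Theta(k)$ yields $2^{\Omega(n\lg n+kn\lg k)}$ for $k=O(n)$, while $\hat n=\Theta(n)$, $\hat k=\Theta(n)$ yields $2^{\Omega(n^{2}\lg n)}$ for $k=\omega(n)$.

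Two loose ends need attention. The full automaton $\calS$ lives over the exponential-size alphabet $2^{S\times S}$; to obtain a bound for a fixed alphabet I would invoke the standard encoding trick~\cite{Sip79,Yan06,CZL09}, which spells out each letter over a binary alphabet and replaces $\calS$ by an equivalent automaton with only a polynomial-in-$n$ increase in the number of states and no change in the index size — a factor swallowed by the $2^{\Omega(\cdot)}$ notation. I expect the only genuine (though mild) obstacle to be the parameter bookkeeping: one must verify that in \emph{every} regime — $k$ constant, $\lg n\le k\le n$, and $k$ superlinear — the choices above keep $\hat n$ and $\hat k$ simultaneously at orders $\Theta(n)$ and $\Theta(\min\{n,k\})$ while respecting $\hat n+2\hat k+1\le n$, and that the Stirling estimate is applied with the correct degenerate convention at $\hat k=1$ (where $\hat n\lg(\hat k!)$, $nk\lg k$ and the claimed bound all vanish consistently). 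The conceptually hard steps — the existence of $Q$-words and the fooling-set argument — have already been carried out in Theorem~\ref{thm:existence} and Lemmas~\ref{lem:Rabin-nature}--\ref{lem:cost-of-separation}.
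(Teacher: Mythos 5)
Your proposal is correct and follows essentially the same route as the paper: invoke Lemma~\ref{lem:cost-of-separation} to get the $|\calD^{Q}|=(\hat n!)(\hat k!)^{\hat n}$ bound for the effective parameters, then choose $\hat n,\hat k$ subject to $\hat n+2\hat k+1\le n$ and $\hat k\le k$ so that $\hat n=\Theta(n)$ and $\hat k=\Theta(\min\{n,k\})$ in the two regimes (the paper additionally pads the index set to exactly $k$ via an equivalent automaton $\calS'$ with empty pairs, a detail you subsume under ``index size at most $k$''). Your explicit parameter choices, the Stirling estimate, and the remark on encoding the exponential alphabet are all consistent with what the paper does (the last point is handled in its introduction by citation rather than in the proof of Theorem~\ref{thm:lower-bound}).
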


\begin{proof}
Here we switch to use $n_{0}$ and $k_{0}$, respectively, for the effective state size and index size in our construction $\calS$. We have $n=2k_{0}+n_{0}+1$. By Lemma~\ref{lem:cost-of-separation}, the complementation of $\calS$ requires $|\calD^{Q}|=2^{\Omega(n_{0}\lg n_{0} + n_{0}k_{0}\lg k_{0})}$ states. If $k_{0} \le k$, we can construct a full Streett automaton $\calS'$ with state size $n$ and index size $k$ as follows. $\calS'$ is almost identical to $\calS$ except that its acceptance condition is defined as $\calF'=\langle G',B'\rangle_{I'}$ (for $I'=[1..k]$) such that for $i \in [1..k_{0}]$, $G'(i) = G(i)$ and $B'(i)=B(i)$ and for $i \in [k_{0}+1, k]$, $G'(i)=B'(i)=\emptyset$. It is easily seen that $\calS'$ is equivalent to $\calS$ and hence the complementation lower bound for $\calS$ also applies to that for $\calS'$. Now when $k=O(n)$, we can always find $n_{0}$ and $k_{0}$ such that $k_{0} \le k$, yet $n_{0}=\Omega(n)$ and $k_{0}=\Omega(k)$, and hence we have the lower bound $2^{\Omega(n\lg n + k n\lg k)}$. When $k=\omega(n)$, we set $k_{0}=n_{0}$ so that $k_{0} \le k$, $n_{0}=\Omega(n)$ and $k_{0}=\Omega(n)$, and hence we have the lower bound $2^{\Omega(n^{2}\lg n)}$.
\end{proof} 
\section{Concluding Remarks}
\label{sec:conclusion}
In this paper we proved a tight lower bound $L(n,k)$ for Streett complementation. We note that we can improve the lower bound by two modifications. First, we allow $G(i)$ (resp. $B(i)$) to be arbitrary subsets of $P_{G}$ (resp. $P_{B}$). Second, we also use multi-dimensional $R$-rankings; the range of $r$ is a set of $k$-tuples of integers in $[1..n]$. As a result, both $R$-ranks and $H$-ranks are $k$-tuples of integers where $k$ can be as large as $2^{n}$ (the current effective $k$ is bounded by $n$). These two modifications require much more sophisticated definition of $Q$-rankings and construction of $Q$-words, but they have no asymptotic effect on $L(n,k)$. The situation is different from Rabin complementation~\cite{CZL09}, where $Q$-rankings are also multi-dimensional (though different terms other than $Q$-rankings and $Q$-words were used), and each component in a $k$-tuple (the value of a $Q$-ranking) is independent from one another, and hence each can impose an independent behavior on $Q$-words. Put it in another way, no matter how large the index set is (the maximum size can be $2^{n}$), all dual properties, each of which is parameterized with an index, can be realized in one $Q$-word. For Streett complementation, the diminishing gain when pushing up $k$ made us realize that with increasing number of $Q$-rankings, more and more correlations occur between $Q$-rankings. Exploiting these correlations leads us to the discovery of the corresponding upper bound. 
\begin{sidewaysfigure}
\centering
\begin{align*}
\begin{array}{cc}
\xymatrix@R=0.6 pc@C=1.2pc{
g_{2} &  & \bullet & \bullet & \bullet & \bullet & \bullet & \bullet & \bullet \\
g_{1} &  & \bullet & \bullet & \bullet & \bullet & \bullet & \bullet & \bullet \\
b_{2} &  & \bullet & \bullet & \bullet \ar[ddr] & \bullet & \bullet & \bullet \ar[dddr] & \bullet \\
b_{1} &  & \bullet & \bullet \ar[ur] & \bullet & \bullet & \bullet \ar[ur] & \bullet & \bullet \\
q_{0} & \langle 2,\langle 1,2 \rangle \rangle & \bullet \ar[r] & \bullet \ar[r] & \bullet \ar[r] & \bullet \ar[ur] \ar[r] & \bullet \ar[r] & \bullet \ar[r] & \bullet \\
q_{1} & \langle 1,\langle 1,2 \rangle \rangle & \bullet \ar@{=>}@*{[blue]}@*{[|<0.5pt>]}[r] & \bullet \ar@{=>}@*{[blue]}@*{[|<0.5pt>]}[r] & \bullet \ar@{=>}@*{[blue]}@*{[|<0.5pt>]}[r] & \bullet \ar@{=>}@*{[blue]}@*{[|<0.5pt>]}[r] & \bullet \ar@{=>}@*{[blue]}@*{[|<0.5pt>]}[r] & \bullet \ar@{=>}@*{[blue]}@*{[|<0.5pt>]}[r] & \bullet \\
q_{2} & \langle 3,\langle 2,1 \rangle \rangle & \bullet \ar[uuur] \ar[r] & \bullet \ar[r] & \bullet \ar[r] & \bullet \ar[r] & \bullet \ar[r] & \bullet \ar[r] & \bullet \\
t & & \bullet & \bullet & \bullet & \bullet & \bullet & \bullet & \bullet \\
  & & 00 & 01 & 02 & 03 & 04 & 05 & 06
}  \qquad & \qquad
\xymatrix@R=0.6 pc@C=1.2pc{
g_{2} &  & \bullet & \bullet & \bullet & \bullet & \bullet \ar[dddddddr]& \bullet & \bullet \\
g_{1} &  & \bullet & \bullet & \bullet \ar[ddddddr]& \bullet & \bullet & \bullet & \bullet \\
b_{2} &  & \bullet & \bullet \ar[ur] & \bullet & \bullet & \bullet & \bullet & \bullet \\
b_{1} &  & \bullet & \bullet & \bullet & \bullet & \bullet & \bullet & \bullet \\
q_{0} & \langle 2,\langle 1,2 \rangle \rangle & \bullet \ar[uur] \ar[r] & \bullet \ar[r] & \bullet \ar[r] & \bullet \ar[uuuur] & \bullet & \bullet & \bullet \\
q_{1} & \langle 1,\langle 1,2 \rangle \rangle & \bullet \ar@{=>}@*{[blue]}@*{[|<0.5pt>]}[r] & \bullet \ar@{=>}@*{[blue]}@*{[|<0.5pt>]}[r] & \bullet \ar@{=>}@*{[blue]}@*{[|<0.5pt>]}[r] & \bullet \ar@{=>}@*{[blue]}@*{[|<0.5pt>]}[r] & \bullet \ar@{=>}@*{[blue]}@*{[|<0.5pt>]}[r] & \bullet \ar@{=>}@*{[blue]}@*{[|<0.5pt>]}[r] & \bullet \\
q_{2} & \langle 3,\langle 2,1 \rangle \rangle & \bullet \ar[r] & \bullet \ar[r] & \bullet \ar[r] & \bullet \ar[r] & \bullet \ar[r] & \bullet \ar[r] & \bullet \\
t & & \bullet \ar[r] & \bullet \ar[r] & \bullet \ar[r] & \bullet \ar[r] & \bullet \ar[r] & \bullet \ar[uuur] & \bullet \\
  & & 06 & 07 & 08 & 09 & 10 & 11 & 12
} \\
& \\
(A) \qquad \textit{The $R$-word $\scrG_{r}$} & (B) \qquad \textit{$\scrG^{(0)}_{h}$ of the $H$-word $\scrG_{h}$} \\
& \\
\xymatrix@R=0.6 pc@C=1.2pc{
g_{2} &  & \bullet & \bullet & \bullet & \bullet & \bullet \ar@{=>}@*{[red]}@*{[|<0.5pt>]}[dddddddr] & \bullet & \bullet \\
g_{1} &  & \bullet & \bullet & \bullet \ar@{=>}@*{[green]}@*{[|<0.5pt>]}[ddddddr] & \bullet & \bullet & \bullet & \bullet \\
b_{2} &  & \bullet & \bullet \ar@{=>}@*{[green]}@*{[|<0.5pt>]}[ur] & \bullet & \bullet & \bullet & \bullet & \bullet \\
b_{1} &  & \bullet & \bullet & \bullet & \bullet & \bullet & \bullet & \bullet \\
q_{0} & \langle 2,\langle 1,2 \rangle \rangle & \bullet \ar[r] & \bullet \ar[r] & \bullet \ar[r] & \bullet \ar[r] & \bullet \ar[r] & \bullet \ar[r] & \bullet \\
q_{1} & \langle 1,\langle 1,2 \rangle \rangle & \bullet \ar@{=>}@*{[green]}@*{[|<0.5pt>]}[uuur] \ar@{=>}@*{[red]}@*{[|<0.5pt>]}[r] & \bullet \ar@{=>}@*{[red]}@*{[|<0.5pt>]}[r] & \bullet \ar@{=>}@*{[red]}@*{[|<0.5pt>]}[r] & \bullet \ar@{=>}@*{[red]}@*{[|<0.5pt>]}[uuuuur] & \bullet & \bullet  & \bullet \\
q_{2} & \langle 3,\langle 2,1 \rangle \rangle & \bullet \ar[r] & \bullet \ar[r] & \bullet \ar[r] & \bullet \ar[r] & \bullet \ar[r] & \bullet \ar[r] & \bullet \\
t & & \bullet \ar[r] & \bullet \ar[r] & \bullet \ar[r] & \bullet \ar@{=>}@*{[green]}@*{[|<0.5pt>]}[r] & \bullet \ar@{=>}@*{[green]}@*{[|<0.5pt>]}[r] & \bullet \ar@{=>}@*{[blue]}@*{[|<0.5pt>]}[uur] & \bullet \\
  & & 12
& 13 & 14 & 15 & 16 & 17 & 18
} \qquad & \qquad
\xymatrix@R=0.6 pc@C=1.2pc{
g_{2} &  & \bullet & \bullet & \bullet \ar[dddddddr] & \bullet & \bullet & \bullet & \bullet \\
g_{1} &  & \bullet & \bullet & \bullet & \bullet & \bullet \ar[ddddddr] & \bullet & \bullet \\
b_{2} &  & \bullet & \bullet & \bullet & \bullet & \bullet & \bullet & \bullet \\
b_{1} &  & \bullet & \bullet \ar[uuur] & \bullet & \bullet & \bullet & \bullet & \bullet \\
q_{0} & \langle 2,\langle 1,2 \rangle \rangle & \bullet \ar[r] & \bullet \ar[r] & \bullet \ar[r] & \bullet \ar[r] & \bullet \ar[r] & \bullet \ar[r] & \bullet \\
q_{1} & \langle 1,\langle 1,2 \rangle \rangle & \bullet \ar@{=>}@*{[blue]}@*{[|<0.5pt>]}[r] & \bullet \ar@{=>}@*{[blue]}@*{[|<0.5pt>]}[r] & \bullet \ar@{=>}@*{[blue]}@*{[|<0.5pt>]}[r] & \bullet \ar@{=>}@*{[blue]}@*{[|<0.5pt>]}[r] & \bullet \ar@{=>}@*{[blue]}@*{[|<0.5pt>]}[r] & \bullet \ar@{=>}@*{[blue]}@*{[|<0.5pt>]}[r] & \bullet \\
q_{2} & \langle 3,\langle 2,1 \rangle \rangle & \bullet \ar[uuur] \ar[r] & \bullet \ar[r] & \bullet \ar[r] & \bullet \ar[uuuuur] & \bullet & \bullet & \bullet \\
t & & \bullet \ar[r] & \bullet \ar[r] &
\bullet \ar[r] & \bullet \ar[r] & \bullet \ar[r] & \bullet \ar[ur] & \bullet \\
  & & 18 & 19 &
20 & 21 & 22 & 23 & 24
} \\
& \\
(C) \qquad \textit{$\scrG^{(1)}_{h}$ of the $H$-word $\scrG_{h}$} & (D) \qquad \textit{$\scrG^{(2)}_{h}$ of the $H$-word $\scrG_{h}$}
\end{array}
\end{align*}
\caption{$Q$-word $\scrG_{f}$ ($f=\langle r,h\rangle$)}
\label{fig:Q-word}
\end{sidewaysfigure} 
\newpage

\newpage
\appendix
\section{Proofs}
\label{sec:proofs}

In this section we prove Theorem~\ref{thm:existence}. Recall that we need a construction to simultaneously satisfy all properties in Definition~\ref{def:Q-word}, which are parameterized with pairs of states (Condition~\eqref{en:Q-word-1}) or states (Condition~\eqref{en:Q-word-2}). The idea is to concatenate a sequence of finite $\Delta$-graphs, each of which satisfies the properties with respect to a specific pair of states or a specific state. With the help of the bypass track ($t$-track), properties associated with each individual subgraph are all preserved in the final concatenation, giving us a desired $Q$-word.

Let $f=\langle r,h \rangle$. $\scrG_{f}$ divides into two sequential subgraphs $\scrG_{r}$ and $\scrG_{h}$, which satisfy Properties~\eqref{en:Q-word-1} and~\eqref{en:Q-word-2}, respectively. Properties~\eqref{en:Q-word-3} and~\eqref{en:Q-word-4} are obvious once the final construction is shown. As stated earlier, Property~\eqref{en:Q-word-1} and Property~\eqref{en:Q-word-2} are orthogonal; Property~\eqref{en:Q-word-1} only relies on $R$-rankings and Property~\eqref{en:Q-word-2} only relies on $H$-rankings. We call a finite $\Delta$-graph whose every level is ranked by the same $R$-ranking, an \emph{$R$-word} if it satisfies Properties~\eqref{en:Q-word-1},~\eqref{en:Q-word-3} and~\eqref{en:Q-word-4}. Similarly, a finite $\Delta$-graph whose every level is ranked by the same $H$-ranking, is called an \emph{$H$-word} if it satisfies Properties~\eqref{en:Q-word-2},~\eqref{en:Q-word-3} and~\eqref{en:Q-word-4}. As $Q$-words, $H$-words (resp. $R$-words) are not uniquely determined by $H$-rankings (resp. $R$-rankings). Nevertheless, all $H$-words (resp. $R$-words) corresponding to a specific $h$ (resp. $r$) serve the construction purpose equally well, and hence we simply name an arbitrarily chosen one by $\scrG_{h}$ (resp. $\scrG_{r}$). Theorem~\ref{thm:existence} builds on Lemmas~\ref{lem:R-word} and~\ref{lem:H-word}.

\begin{lemma}[$R$-Word]
\label{lem:R-word}
An $R$-word exists for every $R$-ranking.
\end{lemma}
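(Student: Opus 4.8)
The plan is to build $\scrG_{r}$ explicitly as a concatenation of $n-1$ short ``descent blocks'', one responsible for stepping from each rank $m$ down to rank $m-1$, and then verify Properties~\eqref{en:Q-word-1},~\eqref{en:Q-word-3},~\eqref{en:Q-word-4} by inspecting the (very rigid) path structure of the resulting graph. Since $r\colon Q\to[1..n]$ is one-to-one on an $n$-element set it is a bijection; write $q^{(m)}=r^{-1}(m)$ for the unique state of rank $m$. For each $m\in[2..n]$ I would take a $\Delta$-graph $\scrG^{[m]}$ of length $k+1$ in which (i) every state $q\in Q$ carries a horizontal self-track, i.e.\ edges $\langle q,l\rangle\to\langle q,l+1\rangle$ for $0\le l\le k$; (ii) there is one extra ``descent path'' $\langle q^{(m)},0\rangle\to\langle b_{1},1\rangle\to\langle b_{2},2\rangle\to\cdots\to\langle b_{k},k\rangle\to\langle q^{(m-1)},k+1\rangle$; and (iii) there are no other edges (so the $g_{i}$-, $b_{i}$- and $t$-tracks carry no horizontal edges). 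Then set $\scrG_{r}=\scrG^{[n]}\cat\scrG^{[n-1]}\cat\cdots\cat\scrG^{[2]}$, so that every level is ranked by $r$ by construction and $|\scrG_{r}|=(n-1)(k+1)$. For $n=3$, $k=2$ this is exactly the $\scrG_{r}$ of Example~\ref{ex:Q-word}, with $\scrG^{[3]}=\scrG^{(1)}_{r}$ and $\scrG^{[2]}=\scrG^{(2)}_{r}$.

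For the verification I would first note that the only outgoing edges at level $0$ (the first level of $\scrG^{[n]}$) and the only incoming edges at the last level (the last level of $\scrG^{[2]}$) are the self-track edges together with the single descent edge out of $q^{(n)}$, respectively into $q^{(1)}$ — so all of these touch only $Q$-vertices, giving Property~\eqref{en:Q-word-3}; one also checks that concatenation introduces no spurious edges, since at each seam the $b_{i}$-vertices have neither an incoming edge from the left block nor an outgoing edge into the right block. Then I would trace an arbitrary full path starting at $\langle q,0\rangle$ with $r(q)=m_{0}$: in blocks $\scrG^{[n]},\dots,\scrG^{[m_{0}+1]}$ the state $q=q^{(m_{0})}$ is never the source of a descent edge, so the path is forced to stay on the $q$-track; in block $\scrG^{[m_{0}]}$ it may either stay on the $q$-track or follow the descent chain down to $q^{(m_{0}-1)}$; and inductively, after reaching $q^{(m')}$ it may in block $\scrG^{[m']}$ either ``stop'' (ride the $q^{(m')}$-track through all remaining blocks) or descend one further step. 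Hence the full paths out of $\langle q,0\rangle$ are exactly, for each $m'\le m_{0}$, one path ending at $\langle q^{(m')},|\scrG_{r}|\rangle$; moreover a path with $m'<m_{0}$ must take the first descent step inside $\scrG^{[m_{0}]}$ and therefore traverses its chain $b_{1},\dots,b_{k}$. This gives Property~\eqref{en:Q-word-1} (for $r(q)>r(q')$ the exhibited path from $q$ to $q'$ visits every $B(i)$) and Property~\eqref{en:Q-word-4} (no full path reaches $\langle q',|\scrG_{r}|\rangle$ when $r(q')>r(q)$, since every reachable endpoint has rank $\le r(q)$).

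There is no deep difficulty here; the single thing that genuinely has to be right is the \emph{order} of the blocks. Arranging them by decreasing source rank, $n,n-1,\dots,2$, is precisely what makes an arbitrary multi-step descent $q^{(m_{0})}\rightsquigarrow q^{(m')}$ realizable as one monotone path while at the same time forbidding every rank-increasing path; with the blocks in any other order one either loses a required descent path or creates a spurious path violating Property~\eqref{en:Q-word-4}. So the bulk of the write-up is the short but slightly tedious bookkeeping that, in this order, the set of full paths of $\scrG_{r}$ is exactly the one described above and nothing more.
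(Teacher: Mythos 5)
Your proposal is correct and is essentially the paper's own construction: the same $n-1$ descent blocks of length $k+1$, ordered by decreasing source rank, each consisting of the horizontal $Q$-tracks plus a single forced chain $q^{(m)}\to b_{1}\to\cdots\to b_{k}\to q^{(m-1)}$ (the paper just packages these edges as explicit letters $Q(i)ToB(1)$, $B(i)ToB(i+1)$, $B(k)ToQ(i)$). Your verification by classifying all full paths out of $\langle q,0\rangle$ is the same argument the paper gives, only stated slightly more explicitly than its ``easily seen from the construction.''
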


\begin{proof}
Let $r$ be an $R$-ranking. $\scrG_{r}$ is constructed as follows. We order $Q$ as $q_{m_{1}}, \ldots, q_{m_{n}}$ such that $r(q_{m_{1}}) > \cdots > r(q_{m_{n}})$. $\scrG_{r}$ has $n-1$ parts $\scrG^{(1)}_{r}, \ldots, \scrG^{(n-1)}_{r}$. In $\scrG^{(i)}_{r}$ ($i \in [1..n-1]$), a path leaves $q_{m_{i}}$ whose $R$-rank is the $i$-th large, visits all $B(j)$-vertices ($j \in I$) and ends at $q_{m_{i+1}}$ whose $R$-rank is the $(i+1)$-th large. Formally we define the following letters
\begin{align*}
Id(Q) &= \{ \, \langle q_{i}, q_{i} \rangle \ \mid \ i \in [n] \, \},  && \displaybreak[0]\\
Q(i)ToB(1) &= Id(Q)  \cup \{\, \langle q_{i}, b_{1} \rangle \, \}, && (i \in [n]) \displaybreak[0]\\
B(i)ToB(i+1) &= Id(Q)  \cup \{\, \langle b_{i}, b_{i+1} \rangle\, \}, && (i \in [1..k-2]) \displaybreak[0]\\
B(k)ToQ(i) &= Id(Q)  \cup \{\, \langle b_{k}, q_{i} \rangle\, \}, && (i \in [n]) \displaybreak[0]
\end{align*}
and then define $\scrG_{r}$ as
\begin{multline*}
Q(m_{1})ToB(1) \cat B(1)ToB(2) \cat \cdots \cat B(k)ToQ(m_{2}) \\
\cat \cdots \cat
Q(m_{n-1})ToB(1) \cat B(1)ToB(2) \cat \cdots \cat B(k)ToQ(m_{n}) \, .
\end{multline*}
We verify that $\scrG_{r}$ satisfies Property~\eqref{en:Q-word-1}. Let $q, q' \in Q$ be such that $r(q) > r(q')$. Let $i, i' \in [1..n]$ be such that $i < i'$, $q=q_{m_{i}}$ and $q'=q_{m_{i'}}$. Recall that by a \emph{full path} in $\scrG$ we mean a path going from level $0$ to level $|\scrG|$. We define a full path $\varrho_{i, i'}$ in $\scrG_{r}$ as follows. The path $\varrho_{i, i'}$ takes $q_{m_{i}}$-track until it reaches the left boundary of the letter $Q(m_{i})ToB(1)$, from where it leaves $q_{m_{i}}$-track to visit $b_{1}, \ldots, b_{k}$ (in this order) and then $q_{m_{i+1}}$. Continuing from $q_{m_{i+1}}$, $\varrho_{i, i'}$ follows the same pattern till it reaches $q_{m_{i+2}}$. Repeating this pattern $i'-i$ times, $\varrho_{i, i'}$ reaches $q_{m_{i'}}$ from where it takes $q_{m_{i'}}$-track till the end of $\scrG_{r}$. In summary, $\varrho_{i, i'}$ takes the form
\begin{multline*}
q=q_{m_{i}} \to \cdots \to q_{m_{i}} \displaybreak[0]
\to b_{1} \to \cdots \to b_{k} \to q_{m_{i+1}} \displaybreak[0]
\to b_{1} \to \cdots \to b_{k} \to q_{m_{i+2}} \displaybreak[0] \\
\to \ \cdots \cdots \cdots \ \to b_{1} \to \cdots \to b_{k} \to q_{m_{i'}} \to \cdots \to q_{m_{i'}}=q' \, .
\end{multline*}
Easily seen from the construction, with respect to any pair $q$ and $q'$ where $r(q) > r(q')$, Property~\eqref{en:Q-word-1} is satisfied by the corresponding $\varrho_{i, i'}$. Properties~\eqref{en:Q-word-3} and~\eqref{en:Q-word-4} are immediate from the construction.
\end{proof}

\begin{example}[$R$-Word]
\label{ex:R-word}
Let us revisit Example~\ref{ex:Q-word}. $Q$ is ordered as $q_{2}, q_{0}, q_{1}$ for $r(q_{2}) > r(q_{0}) > r(q_{1})$. So $m_{1}=2$, $m_{2}=0$ and $m_{3}=1$. In Figure~\ref{fig:Q-word}, the $R$-word $\scrG_{r}$ consists of two parts: $\scrG^{(1)}_{r}$ (level $0$ to level $3$) and $\scrG^{(2)}_{r}$ (level $3$ to $6$), defined as follows.
\begin{align*}
\scrG^{(1)}_{r} &: Q(2)ToB(1) \cat B(1)ToB(2) \cat B(2)ToQ(0), &
\scrG^{(2)}_{r} &: Q(0)ToB(1) \cat B(1)ToB(2) \cat B(2)ToQ(1).
\end{align*}
For Property~\eqref{en:Q-word-1} with respect to $q_{2}$ and $q_{1}$, we can obtain the desired $\varrho_{2, 1}$ as follows. In $\scrG^{(1)}_{r}$, $\varrho_{2, 1}$ starts from $\langle q_{2}, 0 \rangle$, visits $\langle b_{1}, 1 \rangle$, $\langle b_{2}, 2 \rangle$ and then $\langle q_{0}, 3 \rangle$. In $\scrG^{(2)}_{r}$, $\varrho_{2, 1}$ continues from $\langle q_{0}, 3 \rangle$, visits $\langle b_{1}, 4 \rangle$, $\langle b_{2}, 5 \rangle$ and lands at $\langle q_{1}, 6 \rangle$. For Property~\eqref{en:Q-word-1} with respect to $q_{0}$ and $q_{1}$, we can obtain the desired $\varrho_{0, 1}$ as follows. In $\scrG^{(1)}_{r}$, $\varrho_{0, 1}$ starts from $\langle q_{0}, 0 \rangle$, passes through $\scrG^{(1)}_{r}$ via $q_{0}$-track until it reaches $\langle q_{0}, 3 \rangle$ from where it visits $\langle b_{1}, 4 \rangle$, $\langle b_{2}, 5 \rangle$ and lands at $\langle q_{1}, 6 \rangle$.
\end{example}

\begin{lemma}[$H$-Word]
\label{lem:H-word}
An $H$-word exists for every $H$-ranking.
\end{lemma}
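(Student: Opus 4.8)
The plan is to follow the template of Lemma~\ref{lem:R-word}: fix an $H$-ranking $h$ and build $\scrG_{h}$ as a concatenation of $n$ segments $\scrG_{h}=\scrG^{(0)}_{h}\cat\cdots\cat\scrG^{(n-1)}_{h}$, where the $i$-th segment $\scrG^{(i)}_{h}$ is responsible for realizing Property~\eqref{en:Q-word-2} at the single state $q_{i}$ and is the identity on every other $Q$-track. As in the $R$-word construction, the bypass track $T=\{t\}$ is \emph{reset} at each segment boundary: inside $\scrG^{(i)}_{h}$ the $k$ paths built for $q_{i}$ leave the $q_{i}$-track, perform their obligation visits, slide onto the $t$-track, and are all dumped back onto the $q_{i}$-track at the last level of the segment; thus the $q_{i}$-track enters and leaves $\scrG^{(i)}_{h}$ carrying exactly $k$ full paths, which then ride unchanged through the identity segments on either side. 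Hence Property~\eqref{en:Q-word-2} for $q_{i}$ in $\scrG_{h}$ reduces to the corresponding property of $\scrG^{(i)}_{h}$ alone (for every $i\in[n]$), while Properties~\eqref{en:Q-word-3} and~\eqref{en:Q-word-4} are read off from the shape of the letters, just as for $\scrG_{r}$.

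To construct a single segment $\scrG^{(i)}_{h}$, write $\pi=h(q_{i})$, a permutation of $I=[1..k]$. Give $\langle q_{i},0\rangle$ exactly $k$ outgoing edges, one for each intended path $\varrho_{1},\ldots,\varrho_{k}$. The path $\varrho_{l}$ then proceeds deterministically: it visits $B(\pi[l+1]),B(\pi[l+2]),\ldots,B(\pi[k])$, one per level; then $G(\pi[l])$; then it steps onto the $t$-track and stays there; and a single dump letter $\{\langle t,q_{i}\rangle\}\cup Id(Q)$ at the end of the segment returns all $k$ paths to $\langle q_{i},|\scrG^{(i)}_{h}|\rangle$. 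Every other $Q$-track, and every $P_{\mathrm{G}}$-, $P_{\mathrm{B}}$- or $T$-vertex not currently hosting a path, carries identity edges. Concretely $\scrG^{(i)}_{h}$ is a sequence of letters (linear in $k$), each equal to the identity except for an explicitly listed set of edges --- the branch edges out of $q_{i}$, the advance edges of the individual paths, the edges $\langle g_{\pi[l]},t\rangle$ that push a path onto the bypass, the ride edge $\langle t,t\rangle$, and the dump edge --- exactly in the spirit of the letters $Q(i)ToB(1)$, $B(i)ToB(i+1)$, $B(k)ToQ(i)$ of the $R$-word proof (and by spreading the moves over enough levels one could even keep each letter differing from the identity in a single edge).

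With the letters fixed the verification is a direct trace. The crucial observation is that the $k$ paths are pairwise vertex-disjoint before they reach the $t$-track: distinct paths visit distinct $G$-vertices, and if $\varrho_{l}$ and $\varrho_{l'}$ both visit $B(\pi[m])$ they do so at levels $m-l$ and $m-l'$, which differ when $l\neq l'$; moreover no path revisits a $Q$-vertex after level $0$. Hence $\langle q_{i},0\rangle$ is the only branching vertex of $\scrG^{(i)}_{h}$, it has exactly $k$ successors, and every $t$-vertex has a unique successor, so $\scrG^{(i)}_{h}$ has \emph{exactly} $k$ full paths from $\langle q_{i},0\rangle$ to $\langle q_{i},|\scrG^{(i)}_{h}|\rangle$, namely $\varrho_{1},\ldots,\varrho_{k}$, and $\varrho_{l}$ meets exactly $B(\pi[l+1]),\ldots,B(\pi[k]),G(\pi[l])$ among the vertices of $P_{\mathrm{G}}\cup P_{\mathrm{B}}$ --- which is Property~\eqref{en:Q-word-2}. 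Since every other segment of $\scrG_{h}$ is the identity on the $q_{i}$-track, these $k$ paths extend uniquely to $k$ full paths of $\scrG_{h}$ from $q_{i}$ to $q_{i}$ with the same meeting pattern and to no others. Property~\eqref{en:Q-word-3} holds because the first letter of $\scrG^{(0)}_{h}$ has outgoing edges only from $Q$-vertices and the dump letter ending $\scrG^{(n-1)}_{h}$ has incoming edges only into $Q$-vertices; Property~\eqref{en:Q-word-4} holds in its strongest form, since every full path of $\scrG^{(i)}_{h}$ that starts on a $q_{j}$-track ends on the same $q_{j}$-track (the identity track when $j\neq i$, and one of $\varrho_{1},\ldots,\varrho_{k}$ when $j=i$), so $\scrG_{h}$ has no full path between two distinct $Q$-vertices.

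The main obstacle is precisely this bookkeeping behind ``exactly $k$'' and the exclusion of spurious full paths: one must ensure that each vertex on the paths carries exactly its intended successor(s) and no extra self-loop, that the parked portion of the $t$-track has a unique continuation, that at the first level of each segment no $T$-, $P_{\mathrm{G}}$- or $P_{\mathrm{B}}$-vertex is reachable (so the identity riding opens no new source) and the $q_{i}$-track is not left as a hidden extra path, and that the letters of adjacent segments do not compose into a shortcut --- all handled, as for $\scrG_{r}$ and for $Q$-words, by keeping every letter equal to the identity away from a short explicitly listed edge set. The only genuinely new ingredient over the $R$-word is that the branching out of $q_{i}$ is $k$-fold rather than a single chain, which is the source of the $(k!)^{n}$ factor in $|\calD^{Q}|$.
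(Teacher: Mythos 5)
Your construction is correct and follows the paper's overall template---$n$ sequential segments, one per state $q_{i}$, identity edges elsewhere, a bypass $t$-track that is reset at each segment boundary, and letters equal to the identity plus a short explicit edge list---but the gadget realizing the ``exactly $k$'' clause of Property~\eqref{en:Q-word-2} is genuinely different. The paper splits $\scrG^{(i)}_{h}$ into $k$ sub-segments $\scrG^{(i,1)}_{h},\ldots,\scrG^{(i,k)}_{h}$: in each sub-segment exactly one $q_{i}$-vertex acquires a single non-horizontal exit, and the $q_{i}$-track is deliberately broken in $\scrG^{(i,k)}_{h}$ (via the letters $Q(i)To^{-}G(j)$, $G(i)To^{-}T$, $TTo^{-}Q(i)$) so that every full path from $q_{i}$ to $q_{i}$ is forced through exactly one of the $k$ exits; the count is then an enumeration of exits along the track. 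You instead place all $k$ branches at the single vertex $\langle q_{i},0\rangle$ and run the $k$ paths in parallel, with correctness resting on your level-offset observation that $\varrho_{l}$ and $\varrho_{l'}$ meet $B(\pi[m])$ at levels $m-l\neq m-l'$, so the paths are pairwise vertex-disjoint until they merge harmlessly on the $t$-track and $\langle q_{i},0\rangle$ is the unique branching vertex. This buys a shorter word (segment length $O(k)$ rather than $O(k^{2})$ per state, immaterial to the bound) at the price of the disjointness argument and the same care the paper exercises with its $To^{-}$ letters: the branch letter must omit the horizontal edge $\langle q_{i},q_{i}\rangle$ so the $q_{i}$-track does not survive as a $(k\!+\!1)$-st full path, and the dump letter must omit $\langle t,t\rangle$---both of which you flag explicitly. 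I see no gap.
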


\begin{proof}
Let $h$ be an $H$-ranking. $\scrG_{h}$ is constructed as follows. $\scrG_{h}$ comprises $n$ sequential parts $\scrG^{(0)}_{h}, \ldots \scrG^{(n-1)}_{h}$, and for each $i \in [n]$, $\scrG^{(i)}_{h}$ in turn comprises $k$ sequential parts $\scrG^{(i,1)}_{h}, \ldots, \scrG^{(i,k)}_{h}$. To fulfill the requirement with respect to a pair $q_{i} \in Q$  ($i \in [n]$) and $j \in I$ in Property~\eqref{en:Q-word-2}, we select a full path $\varrho_{i,j}$ in $\scrG_{h}$ as follows. The path starts from $\langle q_{i}, 0 \rangle$ and ends at $\langle q_{i}, |\scrG_{h}|\rangle$. The path $\varrho_{i,j}$ simply passes through, via $q_{i}$-track, all $\scrG^{(i')}_{h}$ for $i' \not = i$. In $\scrG^{(i)}_{h}$, $\varrho_{i,j}$ also passes through $\scrG^{(i,1)}_{h}, \ldots, \scrG^{(i,j-1)}_{h}$ via $q_{i}$ track until it reaches the beginning of $\scrG^{(i,j)}_{h}$, from where it visits $B(h(q_{i})[j+1]), \ldots, B(h(q_{i})[k]), G(h(q_{i})[j])$ (in this order), and then enters $t$-track. The path continues and stays on $t$-track till arriving at the second last level of $\scrG^{(i)}_{h}$, and then ending at $\langle q_{i}, |\scrG^{(i)}_{h}|\rangle$. Formally we define the following letters.
\begin{align*}
Id(Q) &= \{\, \langle q, q \rangle \, \mid \, q \in Q \}, && \displaybreak[0] \\
Id(T) &= \{\, \langle t, t \rangle \}, && \displaybreak[0] \\
Q(i)toB(j) &= Id(Q) \cup Id(T) \cup \{\, \langle q_{i},b_{j}\rangle\, \}, && (i \in [n], j \in I) \displaybreak[0] \\
B(i)ToB(j) &= Id(Q) \cup Id(T) \cup \{\, \langle b_{i},b_{j}\rangle\, \}, && (i,j \in I) \displaybreak[0] \\
B(i)ToG(j) &= Id(Q) \cup Id(T) \cup \{\, \langle b_{i},g_{j}\rangle\, \}, && (i,j \in I) \displaybreak[0] \\
Q(i)ToG(j) &= Id(Q) \cup Id(T) \cup \{\, \langle q_{i},g_{j}\rangle\, \}, && (i \in [n], j \in I) \displaybreak[0] \\
G(i)ToT &= Id(Q) \cup Id(T) \cup \{\, \langle g_{i}, t \rangle\, \}, && (i \in I) \displaybreak[0] \\
Q(i)To^{-}G(j) &= Id(Q) \cup Id(T)
\cup \{\, \langle q_{i},g_{j}\rangle\, \} \setminus \{\, \langle q_{i},q_{i}\rangle\, \},
&& (i \in [n], j \in I) \displaybreak[0] \\
G(i)To^{-}T &= Id(Q) \cup Id(T)
\cup \{\, \langle g_{i}, t \rangle\, \} \setminus \{\, \langle q_{i},q_{i}\rangle\, \},
&& (i \in I) \displaybreak[0] \\
TTo^{-}Q(i) &= Id(Q) \cup \{\, \langle t, q_{i} \rangle\, \} \setminus \{\, \langle q_{i},q_{i}\rangle\, \}. && (i \in [n]) \displaybreak[0] \\
\end{align*}
Note that letters of the forms $Q(i)To^{-}G(j)$ or $G(i)To^{-}T$ do not contain horizontal edges $\langle q_{i},q_{i}\rangle$. These letters are used in $\scrG^{(i,k)}_{h}$ so that a full path in $\scrG^{(i)}_{h}$ from $\langle q_{i},0\rangle$ to $\langle q_{i}, |\scrG^{(i)}_{h}|\rangle$ has to leave $q_{i}$-track first and end up at $t$-track. Letters $TTo^{-}Q(i)$ contain neither the bypass edges $\langle t,t\rangle$ nor horizontal edges $\langle q_{i},q_{i}\rangle$. These letters are also used in $\scrG^{(i,k)}_{h}$ so that all full paths in $\scrG^{(i)}_{h}$ using $t$-track end at $\langle q_{i}, |\scrG^{(i)}_{h}|\rangle$. Formally, $\scrG_{h}=\scrG^{(0)}_{h} \cat \scrG^{(2)}_{h} \cat \cdots \cat \scrG^{(n-1)}_{h}$, and for each $i \in [n]$, $\scrG^{(i)}_{h}=\scrG^{(i,1)}_{h} \cat \scrG^{(i,2)}_{h} \cat \cdots \cat \scrG^{(i,k)}_{h}$, where for each $j \in [1..k-1]$, $\scrG^{(i,j)}_{h}$ is
\begin{multline}
Q(i)ToB(h(q_{i})[j+1]) \cat B(h(q_{i})[j+1])ToB(h(q_{i})[j+2]) \\
\cat \cdots \cat B(h(q_{i})[k-1])ToB(h(q_{i})[k]) \cat B(h(q_{i})[k])ToG(h(q_{i})[j]) \cat G(h(q_{i})[j])ToT \, ,
\label{eq:rho_i_j}
\end{multline}
and finally $\scrG^{(i,k)}_{h}$ is
\begin{align}
Q(i)To^{-}G(h(q_{i})[k]) \cat G(h(q_{i})[k])To^{-}T \cat TTo^{-}Q(i) . \label{eq:rho_i_k}
\end{align}

We verify that Property~\eqref{en:Q-word-2} holds for every pair $q_{i} \in Q$ ($i \in [n]$) and $j \in I$. First consider $j \in [1..k-1]$. By~\eqref{eq:rho_i_j}, in $\scrG^{(i,j)}_{h}$ a full path $\varrho'_{i,j}$ exists that starts from $\langle q_{i}, 0 \rangle$, visits $B(h(q_{i})[j+1]), \ldots, B(h(q_{i})[k])$ and $G(h(q_{i})[j])$, and finally ends at $\langle t, |\scrG^{(i,j)}_{h}|\rangle$. We extend $\varrho'_{i,j}$ to a full path $\varrho_{i,j}$ in $\scrG_{h}$ as follows. The path $\varrho_{i,j}$ takes $q_{i}$-track in all $\scrG^{(i')}_{h}$ for $i' \not = i$. Inside $\scrG^{(i)}_{h}$, $\varrho_{i,j}$ also takes $q_{i}$-track in all $\scrG^{(i,j')}_{h}$ for $j' < j$. Inside $\scrG^{(i,j)}_{h}$, $\varrho_{i,j}$ is just $\varrho'_{i,j}$. In $\scrG^{(i,j')}_{h}$ for $j' > j$, $\varrho_{i,j}$ takes $t$-track till it reaches the second last level of $\scrG^{(i,k)}_{h}$, from where it takes the edge $\langle t, q_{i} \rangle$ to the $q_{i}$ at the end of $\scrG^{(i)}_{h}$. Put all together, for every $i \in [n], j \in [1..k-1]$, $\varrho_{i,j}$ takes the form
\begin{multline*}
q_{i} \to \cdots \to q_{i} \to b_{h(q_{i})[j+1]} \to \cdots \to b_{h(q_{i})[k]} \to g_{h(q_{i})[j]}
\to t \to \cdots \to t \to q_{i} \to \cdots \to q_{i} \, .
\end{multline*}

The case $j = k$ is similar. By~\eqref{eq:rho_i_k}, in $\scrG^{(i,k)}_{h}$ a full path $\varrho'_{i,k}$ exists that starts from $\langle q_{i}, 0 \rangle$, visits $G(h(q_{i})[k])$, arrives at the $t$ at the second last level of $\scrG^{(i,k)}_{h}$, and finally takes the edge $\langle t, q_{i} \rangle$ back to the $q_{i}$ at the end of $\scrG^{(i,k)}_{h}$ (also at the end of $\scrG^{(i)}_{h}$). We extend $\varrho'_{i,k}$ to a full path $\varrho_{i,k}$ in $\scrG_{h}$ in the same way as before. The only difference is that $\varrho_{i,k}$ simply passes through, via $q_{i}$-track, all $\scrG^{(i',j')}_{h}$ for any $i' \not = i$ or $j' \not = k$. Put all together, for every $i \in [n]$, $\varrho_{i,k}$ takes the form
\begin{align*}
q_{i} \to \cdots \to q_{i} \to g_{h(q_{i})[k]} \to t \to q_{i} \to \cdots \to q_{i} \, .
\end{align*}
Note that for any $i \in [n]$, $j \in I$, the path $\varrho_{i,j}$ has to leave $q_{i}$-track in $\scrG^{(i,j)}_{h}$ to fulfill the requirement with respect to $q_{i}$ and $j$, and it has to use $t$-track and the edge $\langle t, q_{i} \rangle$ in $\scrG^{(i,k)}_{h}$ to return to $q_{i}$-track at the end of $\scrG^{(i)}_{h}$, because the $q_{i}$-track in $\scrG^{(i,k)}_{h}$ is broken at the vertex from where $\varrho_{i,k}$ starts to fulfill the requirement with respect to $q_{i}$ and $k$ in Property~\eqref{en:Q-word-2}.

We are done with the existence part of Property~\eqref{en:Q-word-2}. As of the exactness part, we note that the following facts hold for every $i\in [n]$.
\begin{enumerate}
\item Vertices $q_{i}$ have only a horizontal outgoing edge in all $\scrG^{(i')}_{h}$ for $i' \not= i$.
\item Each $\scrG^{(i,j)}$ ($j \in I$) contains exactly one $q_{i}$-vertex that has exactly one \emph{non-horizontal} outgoing edge. So in $\scrG^{(i)}_{h}$, there are $k$ such $q_{i}$-vertices in total.
\item The $q_{i}$-track is broken in $\scrG^{(i)}_{h}$ (more precisely, at the beginning of $\scrG^{(i,k)}_{h}$).
\item Any full path in $\scrG_{h}$ from $\langle q_{i}, 0\rangle$ to $\langle q_{i}, |\scrG_{h}|\rangle$ has to take one of non-horizontal outgoing edges in $\scrG^{(i)}_{h}$.
\item If a path in $\scrG^{(i)}_{h}$ takes a non-horizontal edge to leave $q_{i}$-track, then the path has to land on $t$-track and stay on $t$-track till returning to the $q_{i}$ at the end of $\scrG^{(i)}_{h}$.
\end{enumerate}
The exactness part then follows. Property~\eqref{en:Q-word-3} is immediate as before. Property~\eqref{en:Q-word-4} holds due to the fact that for every $i, j \in [n]$, any full path in $\scrG^{(i)}_{h}$ that starts from $\langle q_{j},0\rangle$ ends at $\langle q_{j},|\scrG^{(i)}_{h}|\rangle$.
\end{proof}

\begin{example}[$H$-Word]
\label{ex:H-word}
Let us revisit Example~\ref{ex:Q-word}. In Figure~\ref{fig:Q-word}, the $H$-word $\scrG_{h}$ consists of three parts: $\scrG^{(1)}_{h}$ (level $6$ to level $12$), $\scrG^{(2)}_{h}$ (level $12$ to level $18$), and $\scrG^{(3)}_{h}$ (level $18$ to level $24$), defined as follows: $\scrG^{(i)}_{h}=\scrG^{(i,1)}_{h} \cat \scrG^{(i,2)}_{h}$ for $i \in [3]$ and
\begin{align*}
\scrG^{(0,1)}_{h} &= Q(0)ToB(2) \cat B(2)ToG(1) \cat G(1)ToT , &
\scrG^{(0,2)}_{h} &= Q(0)To^{-}G(2) \cat G(2)To^{-}T \cat TTo^{-}Q(0) \, ,\displaybreak[0]\\
\scrG^{(1,1)}_{h} &= Q(1)ToB(2) \cat B(2)ToG(1) \cat G(1)ToT , &
\scrG^{(1,2)}_{h} &= Q(1)To^{-}G(2) \cat G(2)To^{-}T \cat TTo^{-}Q(1) \, ,\displaybreak[0]\\
\scrG^{(2,1)}_{h} &= Q(2)ToB(1) \cat B(1)ToG(2) \cat G(2)ToT , &
\scrG^{(2,2)}_{h} &= Q(2)To^{-}G(1) \cat G(1)To^{-}T \cat TTo^{-}Q(2) \, . \displaybreak[0]
\end{align*}
Let us take a look the paths $\varrho_{h}$ and $\varrho_{h'}$ (in $\scrG^{(1)}_{h}$) defined in Example~\ref{ex:Q-word}. The path $\varrho_{h}$ (marked green except the last edge) starts at $\langle q_{1}, 12 \rangle$, visits $\langle b_{2}, 13 \rangle$ and $\langle g_{1}, 14 \rangle$, and enters $t$-track at $\langle t, 15 \rangle$. It continues on $t$-track till reaching $\langle t, 17 \rangle$, and then takes $\langle \langle t, 17 \rangle, \langle q_{1}, 18 \rangle \rangle$ (marked blue) to the end. The path $\varrho_{h'}$ (marked red except the last edge) starts at $\langle q_{1}, 12 \rangle$, takes $q_{1}$-track to reach $\langle q_{1}, 15 \rangle$, from where it visits $\langle g_{2}, 16 \rangle$ and then enters $t$-track at $\langle t, 17 \rangle$. Same as $\varrho_{h}$, $\varrho_{h'}$ returns to $q_{1}$-track via $\langle \langle t, 17 \rangle, \langle q_{1}, 18 \rangle \rangle$.
\end{example}

\begin{varthm}{\ref{thm:existence}~[$Q$-Words].}
A $Q$-word exists for every $Q$-ranking.
\end{varthm}
\begin{proof}
By Lemmas~\ref{lem:R-word} and~\ref{lem:H-word}, we have $\scrG_{r}$ and $\scrG_{h}$ as an $R$-word and an $H$-word, respectively. The desired $Q$-word $\scrG$ is just $\scrG_{r} \cat \scrG_{h}$. Properties~\eqref{en:Q-word-3} and~\eqref{en:Q-word-4} follow immediately because they hold both in $\scrG_{r}$ and $\scrG_{h}$.

Let $\varrho^{r}_{i, i'}$ be the full path in $\scrG_{r}$ that satisfies Property~\eqref{en:Q-word-1} for $q_{i}$ and $q_{i'}$ where $i, i' \in [n]$ and $r(q_{i}) > r(q_{i'})$, and $\varrho^{h}_{i', k}$ the full path in $\scrG_{h}$ that satisfies Property~\eqref{en:Q-word-2} (with respect to $q_{i'}$ and index $k$). Then $\varrho^{r}_{i, i'} \cat  \varrho^{h}_{i', k}$ is the path that Property~\eqref{en:Q-word-1} requires for vertex pair $q_{i}$ and $q_{i'}$.

Let $i \in [n]$ and $j \in I$. Let $\varrho^{r}_{i,i}$ be the full $q_{i}$-track in $\scrG_{r}$, and $\varrho^{h}_{i, j}$ the full path in $\scrG_{h}$ that satisfies Property~\eqref{en:Q-word-2} (with respect to vertex $q_{i}$ and index $j$). Then in $\scrG$, for each $q \in Q$, we have $k$ full paths $\varrho^{r}_{i, i} \cat \varrho^{h}_{i, j}$ ($j \in I$), which takes care of the existence part of Property~\eqref{en:Q-word-2}. The exactness part follows from the exactness part of Property~\eqref{en:Q-word-2} for $\scrG_{h}$, and the fact that for each $i \in [n]$, $\varrho^{r}_{i,i}$ is unique in $\scrG_{r}$.
\end{proof}


\begin{thebibliography}{99}

\bibitem[1]{Buc60} J.R. B\"{u}chi. Weak Second-order Arithmetic and Finite Automata.
    \textsl{Mathematical Logic Quarterly}, 6(1-6): 66-92, 1960.

\bibitem[2]{Buc66} J.R. B\"{u}chi. On a decision method in restricted second
    order arithmetic. In \textsl{Proc. 1960 Internat. Congr. Logic, Method. and Philos. Sci.}, pp. 1-11, 1966.

\bibitem[3]{CZL09} Y. Cai, T. Zhang, and H. Luo. An improved lower bound for the
    complementation of Rabin automata. In \textsl{Proc. 24th LICS}, pp. 167-176, 2009.

\bibitem[4]{CZ11b} Y. Cai and T. Zhang. Tight upper bounds for Streett and parity
    complementation. In \textsl{Proc. 20th CSL}, pp. 112-128, 2011.

\bibitem[5]{FK84} N. Francez and D. Kozen. Generalized fair termination. In \textsl{Proc. 11th POPL}, pp. 46-53, 1984.

\bibitem[6]{FKV06} E. Friedgut and O. Kupferman and M.Y. Vardi.
    B\"{u}chi complementation made tighter.
    \textsl{International Journal of Foundations of Computer Science},
    17(4): 851-867, 2006.

\bibitem[7]{Fra86} N. Francez. Fairness. \textsl{Texts and Monographs in Computer Science.}
    Springer-Verlag, 1986.

\bibitem[8]{GS96} Ian Glaister and Jeffrey Shallit. A lower bound technique for the size of nondeterministic finite automata.
    \textsl{Information Processing Letters.} 59(2): 75-77, 1996.

\bibitem[9]{Kla91} N. Klarlund. Progress measures for complementation of
    omega-automata with applications to temporal logic. In \textsl{Proc. 32th FOCS}, pp. 358-367, 1991.

\bibitem[10]{Kup06} O. Kupferman. Avoiding Determinization. In \textsl{Proc. 21th LICS}, pp. 243-254, 2006.

\bibitem[11]{Kur94} R.P. Kurshan. Computer aided verification of coordinating processes:
an automata theoretic approach. Princeton University Press, 1994.

\bibitem[12]{KV01} O. Kupferman and M.Y. Vardi. Weak alternating automata are not
    that weak. \textsl{ACM Transactions on Computational Logic}, 2(3): 408-429, 2001.

\bibitem[13]{KV04} O. Kupferman and M.Y. Vardi. From complementation to certification.
    In \textsl{10th TACAS}, \textsl{LNCS 2988}, pp. 591-606, 2004.

\bibitem[14]{KV05a} O. Kupferman and M.Y. Vardi. Complementation constructions
    for nondeterministic automata on infinite words. In \textsl{Proc. 11th TACAS}, pp. 206-221, 2005.

\bibitem[15]{Lod99} C. L\"{o}ding. Optimal bounds for transformations of
    omega-automata. In \textsl{Proc. 19th FSTTCS}, \textsl{LNCS 1738}, pp. 97-109, 1999.

\bibitem[16]{Mic88} M. Michel. Complementation is more difficult with automata on
    infinite words. \textsl{CNET}, Paris, 1988.

\bibitem[17]{Pit06} N. Piterman. From Nondeterministic B\"{u}chi and Streett Automata to Deterministic Parity
    Automata. In \textsl{Proc. 21th LICS}, pp. 255-264, 2006.

\bibitem[18]{Saf88} S. Safra. On the complexity of $\omega$-automata. In
    \textsl{Proc. 29th FOCS}, pp. 319-327, 1988.

\bibitem[19]{Saf92} S. Safra. Exponential Determinization for $\omega$-Automata with Strong-Fairnes Acceptance Condition. In
    \textsl{Proc. 24th STOC}, pp. 275-327, 1992.

\bibitem[20]{Sch09} S. Schewe. B\"{u}chi complementation made tight.
    In \textsl{Proc. 26th STACS}, pp. 661-672, 2009.

\bibitem[21]{SS78} W. Sakoda, M. Sipser. Nondeterminism and the size of two way finite automata.
    In \textsl{Proc. 10th STOC}, pp. 275-286, 1978.

\bibitem[22]{Sip79} M. Sipser. Lower bounds on the size of sweeping automata.
    In \textsl{Proc. 11th STOC}, pp. 360-364, 1979.

\bibitem[23]{SVW87} A. P. Sistla, M.Y. Vardi, and P.Wolper. The complementation
    problem for B\"{u}chi automata with applications to temporal logic. \textsl{Theoretical Computer Science},
    49:217-327, 1987.

\bibitem[24]{SV89} S. Safra and M.Y. Vardi. On $\omega$-Automata and Temporal Logics. In
    \textsl{Proc. 29th STOC}, pp. 127-137, 1989.

\bibitem[25]{Var07} M.Y. Vardi. The B\"{u}chi complementation saga. In
    \textsl{Proc. 24th STACS}, pp. 12-22, 2007.

\bibitem[26]{VW86} M.Y. Vardi. and P. Wolper. An automata-theoretic approach to automatic program verification.
    In \textsl{Proc. 1st LICS}, pp. 332-334, 1986.

\bibitem[27]{Yan06} Q. Yan. Lower bound for complementation of $\omega$-automata
    via the full automata technique. In \textsl{Proc. 33th ICALP}, \textsl{LNCS 4052}, pp. 589-600, 2006.

\end{thebibliography}
\end{document}